\newtheorem{lemma}{Lemma}
\newtheorem{proposition}{Proposition}
\theoremstyle{plain}
\definecolor{DarkGreen}{rgb}{0.1,0.5,0.1}
\definecolor{DarkRed}{rgb}{0.5,0.1,0.1}
\definecolor{DarkBlue}{rgb}{0.1,0.1,0.5}
\definecolor{Gray}{rgb}{0.2,0.2,0.2}
\DeclareMathOperator*{\sigmoid}{sigmoid}
\DeclareMathOperator*{\softmax}{softmax}
\DeclareMathOperator*{\argmax}{arg\,max}
\DeclareMathOperator*{\argmin}{arg\,min}
\DeclareMathOperator*{\E}{\mathbb{E}}
\DeclareMathOperator*{\Prob}{\mathbb{P}}
\newcommand\footnoteref[1]{\protected@xdef\@thefnmark{\ref{#1}}\@footnotemark}
\title{\Large{Improved Bayes Risk Can Yield Reduced Social Welfare Under Competition}}
\author{Meena Jagadeesan}
\author{Michael I. Jordan}
\author{Jacob Steinhardt$^*$}
\author{Nika Haghtalab$^*$}
\affil{University of California, Berkeley}
\date{}
\begin{document}

\maketitle

\begin{abstract}
As the scale of machine learning models increases, trends such as scaling laws anticipate consistent downstream improvements in predictive accuracy. However, these trends take the perspective of a single model-provider in isolation, while in reality providers often compete with each other for users. In this work, we demonstrate that competition can fundamentally alter the behavior of these scaling trends, even causing overall predictive accuracy across users to be non-monotonic or decreasing with scale. We define a model of competition for classification tasks, and use data representations as a lens for studying the impact of increases in scale. We find many settings where improving data representation quality (as measured by Bayes risk) decreases the overall predictive accuracy across users (i.e., social welfare) for a marketplace of competing model-providers. Our examples range from closed-form formulas in simple settings to simulations with pretrained representations on CIFAR-10. 
At a conceptual level, our work suggests that favorable scaling trends for individual model-providers need not translate to downstream improvements in  social welfare in marketplaces with  multiple model providers. 
\end{abstract}

\section{Introduction}

Scaling trends in machine learning suggest that increasing the scale of a system consistently improves predictive accuracy. For example, scaling laws illustrate that increasing the number of model parameters  \citep{K20, SK20, B21} and amount of data \citep{H22} can reliably improve model performance, leading to better representations and thus better predictions for downstream tasks \citep{hernandez2021scaling}.

\def\thefootnote{*}\footnotetext{Equal  contribution}\def\thefootnote{\arabic{footnote}}

However, these scaling laws typically take the perspective of a single model-provider in isolation, when in reality, model-providers often compete with each other for users. For example, in digital marketplaces, multiple online platforms  may provide similar services (e.g., Google search vs. Bing, Spotify vs. Pandora, Apple Maps vs.\ Google) and thus compete for users  on the basis of prediction quality. A distinguishing feature of competing platforms is that users can switch between platforms and select a platform that offers them the highest predictive accuracy for their specific requests. This breaks the direct connection between predictive accuracy of a single platform in isolation and social welfare across competing platforms, and raises the question: \textit{what happens to scaling laws when model-providers compete with each other?}

\begin{figure}[t]
    \begin{subfigure}{0.49\textwidth}
        \includegraphics[scale=0.40]{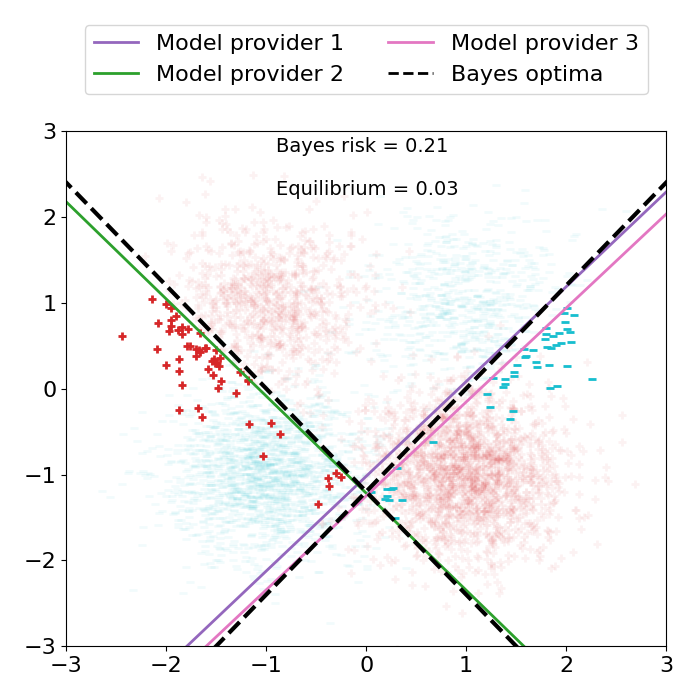}
        \caption{}
        \label{fig:theory-noiseb}
    \end{subfigure}
        \centering
    \begin{subfigure}{0.49\textwidth}
        \includegraphics[scale=0.40]{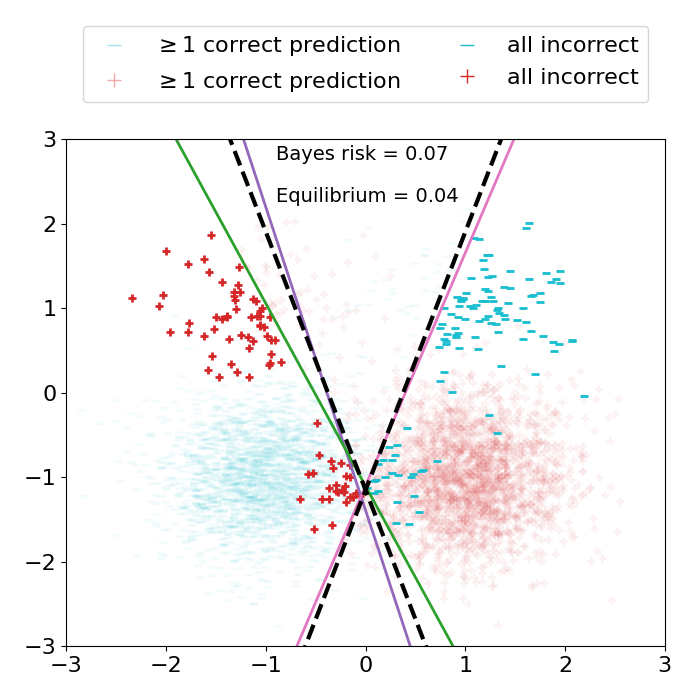}
        \caption{}
        \label{fig:theory-bayesa}
    \end{subfigure}
    \caption{Comparison of equilibrium loss on two data distributions, one with high Bayes risk (left) and one with lower Bayes risk (right). Each plot shows the linear predictors chosen at equilibrium under competition between three model-providers (solid lines), along with two approximately Bayes-optimal predictors (dashed lines). The equilibrium social loss is lower in the left plot than the right plot, even though the Bayes risk is much higher. The intuition is that approximate Bayes optima disagree on more data points in the left plot than in the right plot; thus, users have a greater likelihood of at least one predictor offering them a correct prediction, which increases the overall predictive accuracy for users (i.e., the social welfare).}

    \label{fig:intro}
\end{figure}

We show that the typical intuition about scaling laws can fundamentally break down under competition. Surprisingly, even monotonicity can be violated: increasing scale can \textit{decrease} the overall predictive accuracy (social welfare) for users. More specifically, we study increases to scale through the lens of data representations (i.e., learned features), motivated by how increasing scale generally improves representation quality \citep{BCV13}.\footnote{We are motivated by emerging marketplaces where different model-providers utilize the same pretrained model, but \textit{finetune} the model in different ways. To simplify this complex training process, we conceptualize pretraining as \textit{learning data representations (e.g., features)} and fine-tuning as \textit{learning a predictor from these representations}. In this formalization, increasing the scale of the pretrained model (e.g., by increasing the number of parameters or the amount of data) leads to improvements in data representations accessible to the model-providers during ``fine-tuning''.}
 We exhibit several multi-class classification tasks where better data representations (as measured by Bayes risk) \textit{decrease} the overall predictive accuracy (social welfare) for users, when varying data representations along several different axes.

The basic intuition for this non-monotonicity is illustrated in Figure \ref{fig:intro}. When data representations are low quality, any predictor will be incorrect on a large fraction of users, and near-optimal predictors may disagree on large subpopulations of users. Model providers are thus incentivized to choose complementary predictors that cater to different subpopulations (market segments), thus improving the overall predictive accuracy for users. In contrast, when representations are high quality, each optimal predictor is incorrect on only a small fraction of users, and near-optimal predictors likely agree with each other on most data points. As a result, model-providers are incentivized to select similar predictors, which decreases the overall predictive accuracy for users.

To study when non-monotonicity can occur, we first focus on a stylized setup that permits closed-form calculations of the social welfare at equilibrium (Section \ref{sec:theory}). Using this characterization, in three concrete binary classification setups, we show that the equilibrium social welfare can be non-monotonic in Bayes risk. In particular, we vary representations along three axes---the per-representation Bayes risks, the noise level of representations, and the dimension of the data representations---and exhibit non-monotonicity in each case (Figure \ref{fig:main}).

Going beyond the stylized setup of Section \ref{sec:theory}, in Section \ref{sec:linear} we consider linear function classes and   demonstrate empirically that the social welfare can be non-monotonic in the data representation quality. We consider binary and 10-class image classification tasks on CIFAR-10 where data representations are obtained from the last-layer representations of AlexNet, VGG16, ResNet18, ResNet34, and ResNet50,  pretrained on ImageNet. Better representations (as measured by Bayes risk) can again perform worse under competition (Figures \ref{fig:other} and \ref{fig:10class}). We  also consider synthetic data where we can vary representation quality more systematically, again finding ubiquitous non-monotonicities.

Altogether, our results demonstrate that the classical setting of a single model-provider can be a poor proxy for understanding multiple competing model-providers.  This suggest that caution is needed when inferring that increased social welfare necessarily follows from  the continuing trend towards improvements in predictive accuracy in machine learning models. Machine learning researchers and regulators should evaluate methods in environments with competing model-providers in order to reasonably assess the implications of raw performance improvements for social welfare.

\subsection{Related work}

Our work connects to research threads on the \textit{welfare implications of algorithmic decisions} and \textit{competition between data-driven platforms}.

\paragraph{Welfare implications of algorithmic decisions.} Recent work investigates \textit{algorithmic monoculture} \citep{KR21, BCKJL22}, a setting in which multiple model-providers use the same predictor. In these works, monoculture is  intrinsic to the decision-making pipeline: model-providers are given access to a shared algorithmic ranking \citep{KR21} or shared components in the training pipeline \citep{BCKJL22}. In contrast, in our work, monoculture may arise endogenously from competition, as a result of scaling trends. Model-providers are always given access to the same function classes and data, but whether or not monoculture arises depends on the quality of data representations and its impact on the incentives of model-providers. Our work thus offers a new perspective on algorithmic monoculture, suggesting that it may arise naturally in competitive settings as a side effect of improvements in data representation quality.

More broadly, researchers have identified several sources of mismatch between predictive accuracy and downstream welfare metrics. This includes \textit{narrowing} of a classifier under repeated interactions with users~\citep{HSNL18}, \textit{preference shaping} of users induced by a recommendation algorithm~\citep{CDHR22, DM22, CHRH22}, 
\textit{strategic adaptation} by users under a classifier~\citep{bruckner12pred, hardt16strat}, and the \textit{long-term impact of algorithmic decisions}~\citep{LDRSH18, LWHKBC20}. 

\paragraph{Competition between data-driven platforms.} Our work is also related to the literature on competing predictors. The model in our paper shares similarities with the work of \citet{BT17, BT19}, who studied equilibria between competing predictors. \citet{BT17, BT19} show that empirical risk minimization is not an optimal strategy for a model-provider under competition and design algorithms that compute the best-responses; in contrast, our focus is on the equilibrium social welfare and how it changes with data representation quality. The specifics of our model also slightly differ from the specifics of \citet{BT17, BT19}. In their model, each user has an accuracy target that they wish to achieve and randomly chooses between model-providers that meet that accuracy target; in contrast, in our model, each user noisily chooses the model-provider that minimizes their loss and model-providers can have asymmetric market reputations. 

Our work also relates to \textit{bias-variance games} \citep{FGHJN19} between competing model-providers. However, \citet{FGHJN19} focus on the the equilibrium strategies for the model-provider, but do not consider equilibrium social welfare for users; in contrast, our work focuses on the equilibrium social welfare. The model of \citet{FGHJN19} also differs from the model in our work. In \citet{FGHJN19}, a model-provider action is modeled as choosing an error \textit{distribution} for each user, where the randomness in the error is intended to capture randomness in the training data samples and in the predictor; moreover, the action set includes error distributions with a range of different variances. In contrast, in our population-level setup with deterministic predictors, the error distribution for every user is always a point mass (variance 0). Thus, the equilibrium characterization of \citet{FGHJN19} does not translate to our model. The specifics of the model-provider utility in the work of \citet{FGHJN19} differs slightly from our model as well. 

Other aspects studied in this research thread include competition between model-providers using \textit{out-of-box} learning algorithms that do not directly optimize for market share~\citep{GZKZ21, KGZ22, DCRMF22}, competition between model-providers selecting \textit{regularization parameters} that tune model complexity \citep{IK22}, competition between \textit{bandit algorithms} where data directly comes from users~\citep{AMSW20, JJH22}, and competition between \textit{algorithms dueling} for a user \citep{IKLMPT11}. Our work also relates to \textit{classical economic models of product differentiation} such as Hotelling's model \citep{H29, DGT79} (see \citet{ProductDifferentiation} for a textbook treatment), as well as the emerging area of \textit{platform competition}~\citep[see, e.g.,][]{J21, C21}.

\section{Model}\label{sec:model}

We focus on a multi-class classification setup with input space $X\subseteq \mathbb{R}^d$ and output space $Y = \left\{0,1, 2, \ldots, K-1\right\}$. Each user has an input $x$ and a corresponding true output $y$, drawn from a distribution $\mathcal{D}$ over $X \times Y$. Model providers choose predictors $f$ from some model family $\mathcal{F} \subseteq (\Delta(Y))^X$ where $\Delta(Y)$ is the set of distributions over $Y$. A user's loss given predictor $f$ is $\ell(f(x), y) = \mathbb{P}[y \neq f(x)]$. In Section \ref{sec:theory}, we take $\mathcal{F} = \{0,1, 2, \ldots, K-1\}^X$ to be all deterministic functions mapping inputs to classes, while in Section \ref{sec:linear} we consider linear predictors of the form $f(x) = \softmax(Wx + b)$. 

We study competition between $m \ge 2$ model-providers for users, building on the model of \citet{BT17, BT19}. We index the model-providers by $[m] := \left\{1, 2, \ldots, m\right\}$, and let $f_j$ denote the predictor chosen by model provider $j$. After the model-providers choose predictors $f_1, \ldots, f_m$, each user then chooses one of the $m$ model-providers to link to, based on prediction accuracy. Model-providers aim to optimize the number of users that they win. (We note that this model is stylized and will make several simplifying assumptions; we defer a detailed discussion of the implications of these assumptions to Section \ref{sec:modeldiscussion}.)

\paragraph{User decisions.} Users noisily pick the model-provider offering the best predictions for them. That is, a user with representation $x$ and true label $y$ chooses a model-provider $j^*(x,y)$ such that the loss $|y - f_{j^*(x,y)}(x)|$ is the smallest across all model-providers $j \in [m]$, subject to noise in user decisions. More formally, we model user noise with the logit model \citep{train2002discrete}, also known as the Boltzmann rationality model:
\begin{equation}
\label{eq:userchoice}
\Prob[j^*(x,y) = j] = \frac{e^{- \ell(f_j(x), y) / c}}{\sum_{j'=1}^m e^{- \ell(f_{j'}(x), y) / c}}, 
\end{equation}
where $c > 0$ denotes a noise parameter. We extend this model to account for uneven market reputations across decisions in Section \ref{subsec:unequal}.

\paragraph{Model provider incentives.} A model-provider's utility is captured by the market share that they win. That is, model-provider $j$'s utility is 
\[u(f_j; \mathbf{f}_{-j}) := \E_{(x,y) \sim \mathcal{D}} \left[\Prob[j^*(x,y) = j] \right],\]
where $\textbf{f}_{-j}$ denotes the predictors chosen by the other model-providers and where the expectation is over $(x,y)$ drawn from $\mathcal{D}$. Since the market shares always sum to one, this is a constant-sum game.

Each model-provider chooses a \textit{best response} to the predictors of other model-providers. That is, model-provider $j$ chooses a predictor $f^*_j$ such that 
\[f^*_j \in \argmax_{f_j \in \mathcal{F}} u(f_j; \mathbf{f}_{-j}). \] The best-response captures that model-providers optimize for market share. In practice, model-providers may do so via {A/B testing} to steer towards predictors that maximize profit, or by {actively collecting data} on market segments where competitors are performing poorly.

We study market outcomes $\mathbf{f} = (f^*_1, f^*_2, \ldots, f^*_m)$ that form a Nash equilibrium. Recall that $(f^*_1, f^*_2, \ldots, f^*_m)$ is a \textit{pure strategy Nash equilibrium} if for every $j \in [m]$, model-provider $j$'s predictor is a best-response to $\mathbf{f}^*_{-j}$: that is, $f^*_j \in \argmax_{f_j\in \mathcal{F}} u(f_j; \mathbf{f}^*_{-j})$. In well-behaved instances, pure-strategy equilibria exist (see theoretical results in Section \ref{sec:theory} and simulation results in Section \ref{sec:linear}). However, for our results in Section \ref{subsec:unequal}, we must turn to mixed strategy equilibria where model-providers instead choose distributions $\mu_j$ over $\mathcal{F}$.

\paragraph{Quality of market outcome for users.}
We are interested in studying the quality of a market outcome $\mathbf{f} = (f_1, f_2, \ldots, f_m)$ in terms of user utility. The quality of $\mathbf{f}$ is determined by the overall \textit{social loss} that it induces on the user population, after users choose between model-providers:
\begin{equation}
\label{eq:sociallossdef}
  \texttt{SL}(f_1, \ldots f_m) := \E[\ell(f_{j^*(x,y)}(x),y)]. 
\end{equation}
When $f^*_1, \ldots, f^*_m$ is a Nash equilibrium, we refer to $\texttt{SL}(f^*_1, \ldots f^*_m)$ as the \textit{equilibrium social loss.}

Our goal is to study how the equilibrium social loss changes when the representation quality (i.e., the quality of the input representations $X$) improves. We formalize representation quality as the minimum risk $\texttt{OPT}_{\text{single}}$ that a single model-provider could have achieved on the distribution $\mathcal{D}$ with the model family $\mathcal{F}$. This means that $\texttt{OPT}_{\text{single}}$ is equal to the Bayes risk:  
\[\texttt{OPT}_{\text{single}} := \min_{f \in \mathcal{F}} \E\left[ \ell(f(x),y)\right].\]
In the following sections, we show that the equilibrium social loss $\texttt{SL}(f^*_1, \ldots f^*_m)$ can be non-monotonic in the representation quality (as measured by $\texttt{OPT}_{\text{single}}$), when representations are varied along a variety of axes.

\section{Non-monotonicity of Equilibrium Social Loss in a Stylized Setup}\label{sec:theory}

To understand when non-monotonicity can occur, we first consider a stylized setup (described below) that permits closed-form calculations of the social loss. We first show a simple mathematical example that illustrates non-monotonicity (Section \ref{subsec:example}). We characterize the equilibrium social loss in this setup for binary classification (Section \ref{subsec:idealized}), and apply this characterization to three concrete setups that vary representation quality along different axes (Section \ref{subsec:examples}): we show that the equilibrium social loss can be non-monotonic in Bayes risk in all of these setups (Figures~\ref{fig:theory-noise}-\ref{fig:theory-dimension}). Finally, we extend our theoretical characterization from Section \ref{subsec:idealized} to setups with more than 2 classes (Section \ref{subsec:idealizedmulticlass}), and we extend our model and results to model-providers with unequal market reputations (Section \ref{subsec:unequal}).

\paragraph{Specification of stylized setup.} Assume the input space $X$ is finite and let $\mathcal{F} = \mathcal{F}^{\text{multi-class}}_{\text{all}}$ contain {all deterministic functions} from $X$ to $\left\{0,1, \ldots, K-1\right\}$. For simplicity, we also assume that users make noiseless decisions (i.e., we take $c \rightarrow 0$), so a user's choice of model-provider $j^*(x,y)$ is specified as follows: 
\begin{equation}
    \label{eq:userchoicespecific}
    \Prob[j^*(x,y) = j] = 
\begin{cases}
0 & \text{ if } j \not\in \argmin_{j' \in [m]} \mathbbm{1}[y \neq f_{j'}(x)] \\
\frac{1}{\left|\argmin_{j' \in [m]} \mathbbm{1}[y \neq f_{j'}(x)] \right|}
& \text{ if } j \in \argmin_{j' \in [m]} \mathbbm{1}[y \neq f_{j'}(x)].
\end{cases}
\end{equation}
In other words, users pick the model-provider with minimum loss, choosing randomly in case of ties. We show that pure strategy equilibria are guaranteed to exist in this setup. 
\begin{proposition}
\label{prop:existence}
Let $X$ be a finite set of representations, let there be $K \ge 2$ classes, let $\mathcal{F} = \mathcal{F}^{\text{multi-class}}_{\text{all}}$, and let $\mathcal{D}$ be the distribution over $(X,Y)$. Suppose that user decisions are noiseless (i.e., user decisions are given by \eqref{eq:userchoicespecific}). For any $m \ge 2$, there exists a pure strategy equilibrium. 
\end{proposition}

\subsection{Simple mathematical example of non-monotonicity}\label{subsec:example}

We show a simple example where improving data representation quality (i.e. Bayes risk) reduces the equilibrium social welfare. Consider a distribution over binary labels given by $\mathbb{P}[Y = 1] = 0.6$ and $\mathbb{P}[Y = 0] = 0.4$, and suppose that there are $m = 3$ model-providers. We consider two different sets of representations $X_1$ and $X_2$, which give rise to two different distributions $\mathcal{D}_1$ over $X_1 \times Y$ and $\mathcal{D}_2$ over $X_2 \times Y$ satisfying  $\mathbb{P}[Y = 1] = 0.6$ and $\mathbb{P}[Y = 0] = 0.4$. 

Suppose that $X_1 = \left\{x_0 \right\}$ consists of the trivial representation which provides no information about users. The distribution $\mathcal{D}_1$ is specified by $\mathbb{P}_{\mathcal{D}_1}[Y = 1 \mid X_1 = x_0] = 0.6$ and $\mathbb{P}_{\mathcal{D}_1}[Y = 0 \mid X_1 = x_0] = 0.4$. In this case, the Bayes risk is $0.4$. Moreover,  
it is not difficult to see that $f^*_1(x_0) = f^*_2(x_0) = 1$ and $f^*_3(x_0) = 0$ is an equilibrium. (The reason that $f_1(x_0) = f_2(x_0) = f_3(x_0) = 1$ is not an equilibrium is that model provider 3 would deviate to $f^*_3(x_0) = 0$ and 
\textit{increase} their utility from $1/3$ to $0.4$.) Since the model-providers collectively offer both labels for the representation $x_0$, each user has the option to choose either label, so the equilibrium social loss $\texttt{SL}(f^*_1, f^*_2, f^*_3) = 0$. 

Next, suppose that $X_2 = \left\{x_1,x_2\right\}$ consists of binary representations that provide some nontrivial information about users.
In particular, the distribution $\mathcal{D}_2$ is specified by equally likely representations $\mathbb{P}_{\mathcal{D}_2}[X_2 = x_1] = \mathbb{P}_{\mathcal{D}_2}[X_2 = x_2] = 0.5$. The conditional distribution $Y \mid X_2$ is specified by $\mathbb{P}_{\mathcal{D}_2}[Y = 1 \mid X_2 = x_1] = 0.4$, $\mathbb{P}_{\mathcal{D}_2}[Y = 0 \mid X_2 = x_1] = 0.6$, $\mathbb{P}_{\mathcal{D}_2}[Y = 1 \mid X_2 = x_2] = 0.8$, and $\mathbb{P}_{\mathcal{D}_2}[Y = 0 \mid X_2 = x_2] = 0.2$. In this case, the Bayes risk goes down to $0.3$. Moreover, it is not difficult to see that $f^*_1(x_1) = f^*_2(x_1) = 0$,  $f^*_3(x_1) = 1$, and $f^*_1(x_2) = f^*_2(x_2) = f^*_3(x_2) = 1$ is an equilibrium. (Intuitively, the reason that $f^*_1(x_2) = f^*_2(x_2) = f^*_3(x_2) = 1$ occurs at equilibrium in this setup is that no model provider $i \in [m] = \left\{1, 2, 3\right\}$ wants to deviate to $f_i(x_2) = 0$, since this would \textit{decrease} their utility on $X_2 = x_2$ from $1/3$ to $0.2$.) Since users with representation $x_2$ no longer have the option to choose the label of $0$, the equilibrium social loss is $\texttt{SL}(f^*_1, f^*_2, f^*_3) = 0.1$.

As a result, even though the Bayes risk is lower for representations in the second setup than for the representations in the first setup, the equilibrium social loss is higher. This instantation thus provides a simple mathematical example where non-monotonicity occurs. In the remaining sections, we consider more general setups that elucidate what factors drive non-monotonicity. 

\subsection{Characterization of the equilibrium social loss for binary classification}\label{subsec:idealized}

To generalize the above example, we analyze general instantations of the stylized setup, focusing first on binary classification. Let $\mathcal{F}^{\text{binary}}_{\text{all}}$ denote the function class $\mathcal{F}^{\text{multi-class}}_{\text{all}}$ in the special case of $K = 2$ classes. Since $\mathcal{F}^{\text{binary}}_{\text{all}}$ lets model-providers make independent predictions about each representation $x$, the only source of error is noise in individual data points. To capture this, we define the \emph{per-representation Bayes risk} $\alpha(x)$ to be: 
\begin{equation}
\label{eq:alphabayes}
 \alpha(x) := \min(\Prob(y = 1 \mid x), \Prob(y = 0 \mid x)). 
\end{equation}
The value $\alpha(x)$ measures how random the label $y$ is for a given representation $x$. As a result, $\alpha(x)$ is the minimum error that a model-provider can hope to achieve on the given representation $x$. Increasing $\alpha(x)$ increases the Bayes risk $\texttt{OPT}_{\text{single}}$: in particular,  $\texttt{OPT}_{\text{single}}$ is equal to the average value $\E[\alpha(x)]$ across the population. The equilibrium social loss, however, depends on other aspects of $\alpha(x)$.

We characterize the equilibrium social loss in terms of the per-representation Bayes risks in the following proposition. Our characterization focuses on  pure-strategy equilibria, which are guaranteed to exist in this setup (see Proposition \ref{prop:existence}).
\begin{proposition}
\label{prop:idealized}
Let $X$ be a finite set, let $K =2$, and let $\mathcal{F} = \mathcal{F}^{\text{binary}}_{\text{all}}$. Suppose that user decisions are noiseless (i.e., user decisions are given by \eqref{eq:userchoicespecific}). 
Suppose also that $\alpha(x) \neq 1/m$ for all $x \in X$.\footnote{When $\alpha(x) = 1/m$, there turn out to be multiple pure-strategy equilibria with different social losses.} 
At any pure strategy Nash equilibrium $f^*_1, \ldots, f^*_m$, the social loss $\texttt{SL}(f^*_1, \ldots, f^*_m)$ is equal to:
\begin{equation}
\label{eq:socialloss} 
\texttt{SL}(f^*_1, \ldots, f^*_m) =  \E_{(x,y) \sim \mathcal{D}} \left[\alpha(x) \cdot \mathbbm{1}[\alpha(x) < 1/m] \right]. 
\end{equation}
\end{proposition}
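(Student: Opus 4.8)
The plan is to exploit the fact that $\mathcal{F}_{\text{all}}$ lets each model-provider choose its prediction $f_j(x)\in\{0,1\}$ independently at every representation $x$, and that the utility $u(f_j;\mathbf{f}_{-j})=\sum_x \Prob(x)\,\E_{y\mid x}\!\left[\Prob[j^*(x,y)=j]\right]$ is additive over $x$, with each summand depending only on the local choices $(f_1(x),\dots,f_m(x))$ and the conditional law of $y\mid x$. Consequently the game decomposes into independent \emph{per-representation games}, and a profile is a pure Nash equilibrium if and only if its restriction to each $x$ is a pure equilibrium of the local game there. This reduces the proposition to a point-by-point analysis, after which I would sum the local social-loss contributions over $x$ (writing $\Prob(x)$ for the marginal mass of $x$; points with $\Prob(x)=0$ contribute nothing to either utilities or social loss and need no separate treatment).

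Fix $x$ with $\Prob(x)>0$ and write $p=\Prob(y=1\mid x)$, so $\alpha(x)=\min(p,1-p)$; let $k$ be the number of providers predicting $1$. First I would tabulate the local payoffs: a provider whose prediction matches $y$ splits that portion of the mass uniformly, so a $1$-predictor earns $p/k$ and a $0$-predictor earns $(1-p)/(m-k)$ when $1\le k\le m-1$, while the all-agree cases $k\in\{0,m\}$ give every provider $1/m$ since all losses tie. Next I would record the local social loss as a function of $k$: it is $0$ whenever both labels appear ($1\le k\le m-1$), since some provider is then always correct; it equals $p\,\Prob(x)$ when $k=0$; and $(1-p)\,\Prob(x)$ when $k=m$. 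The formula to be proved thus reduces to showing that every local equilibrium places all providers on the majority label exactly when $\alpha(x)<1/m$, and keeps both labels present exactly when $\alpha(x)>1/m$.

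The crux is this equilibrium characterization, which I would obtain from the one-step deviation inequalities. Assuming without loss of generality $p\le 1/2$ (so $\alpha(x)=p$; the case $p>1/2$ is symmetric under swapping labels), the decisive deviation is a provider switching to the minority label $1$: from the all-$0$ profile this changes its payoff from $1/m$ to $p$, which is profitable precisely when $p>1/m$. Hence when $\alpha(x)<1/m$ the all-majority profile is stable, and I would rule out every other $k$ by analogous checks (a lone $1$-predictor at $k=1$ strictly prefers to join the majority since $p<1/m$; for $2\le k\le m-1$ stability reduces to $p(m+1)\ge k$, which fails because $p(m+1)<1+1/m<2\le k$; and $k=m$ requires $p\ge 1-1/m$, which fails). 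This establishes \emph{uniqueness} of the local equilibrium and a local social loss of exactly $\alpha(x)\,\Prob(x)$. Conversely, when $\alpha(x)>1/m$ both all-agree profiles admit a profitable deviation---from all-$0$ a switch yields $p>1/m$, and from all-$1$ a switch yields $1-p>1/m$---so, invoking existence from Lemma~\ref{lemma:existence}, every equilibrium has $1\le k\le m-1$ and contributes $0$. The hypothesis $\alpha(x)\neq 1/m$ is exactly what makes these deviation incentives strict, excluding the knife-edge.

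Finally I would reassemble the contributions across representations:
\[
\texttt{SL}(f^*_1,\dots,f^*_m)=\sum_{x:\,\alpha(x)<1/m}\alpha(x)\,\Prob(x)=\E_{(x,y)\sim\mathcal{D}}\!\left[\alpha(x)\,\mathbbm{1}[\alpha(x)<1/m]\right],
\]
which is the claimed identity. I expect the main obstacle to be the careful bookkeeping of the deviation inequalities across all intermediate values of $k$ in the $\alpha(x)<1/m$ regime, since it is precisely uniqueness of the local equilibrium there that guarantees the displayed value holds at \emph{every} pure equilibrium rather than merely at some equilibrium.
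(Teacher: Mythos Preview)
Your proposal is correct and follows essentially the same route as the paper: decompose into per-representation games via additivity of the utility (the paper packages this as Lemma~\ref{lemma:decomposition}), characterize the local equilibria by deviation analysis, and invoke existence (Lemma~\ref{lemma:existence}) to conclude that when $\alpha(x)>1/m$ both labels must appear. The only cosmetic difference is that the paper, rather than ruling out each $k\neq 0$ separately, shows in one stroke that $y^*$ is the \emph{unique} best response for player~$1$ against any configuration of the others when $\alpha(x)<1/m$; your case-by-case deviation checks achieve the same conclusion and are equally valid.
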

\noindent The primary driver of Proposition \ref{prop:idealized} 
is that as the per-representation Bayes risk $\alpha(x)$ decreases, the equilibrium predictions for $x$ go from \textit{heterogeneous} (different model-providers offer different predictions for $x$) to \textit{homogenous} (all model-providers offer the same prediction for $x$). In particular, if $\alpha(x)$ is below $1/m$, then all model-providers choose the Bayes optimal label $y^* = \argmax_{y'} \Prob[y' \mid x]$, so predictions are homogeneous; on the other hand, if $\alpha(x)$ is above $1/m$, then at least one model-provider will choose $1- y^*$, so predictions are heterogeneous. 
When predictions are heterogeneous, each user is offered perfect predictive accuracy by some model-provider, which results in zero social loss. On the other hand, if predictions are homogeneous and all model-providers choose the Bayes optimal label, the social loss on $x$ is the per-representation Bayes risk $\alpha(x)$. Putting this all together, the equilibrium social loss takes the value in \eqref{eq:socialloss}. We defer a proof of Proposition \ref{prop:idealized} to Appendix \ref{appendix:theory}.

\subsection{Non-monotonicity along several axes of varying representations}\label{subsec:examples}

\begin{figure}[t]
    \centering
    \begin{subfigure}[b]{0.32\textwidth}
        \includegraphics[scale=0.12]{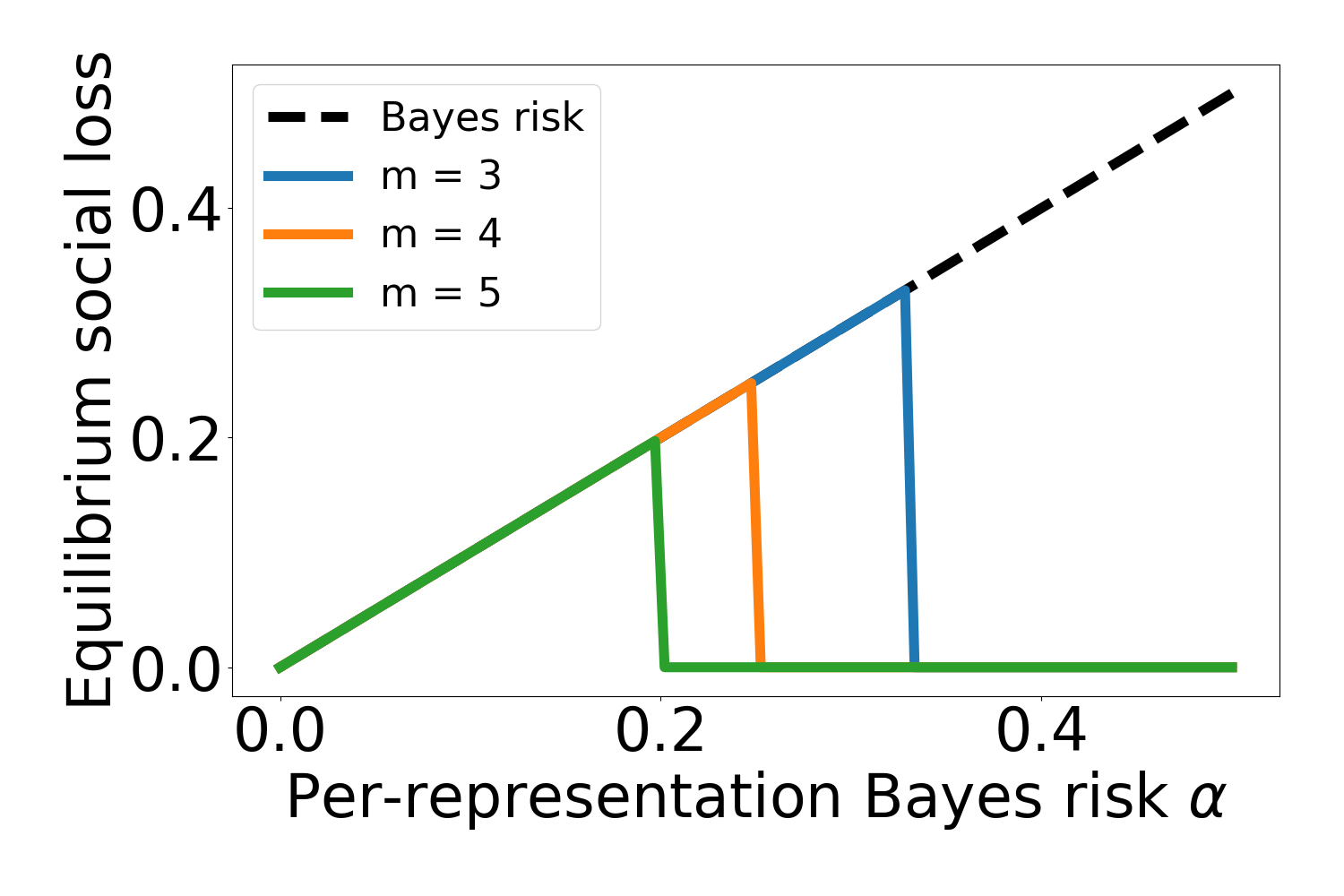}
        \caption{}
        \label{fig:theory-bayes}
    \end{subfigure}
    \begin{subfigure}[b]{0.32\textwidth}
        \includegraphics[scale=0.12]{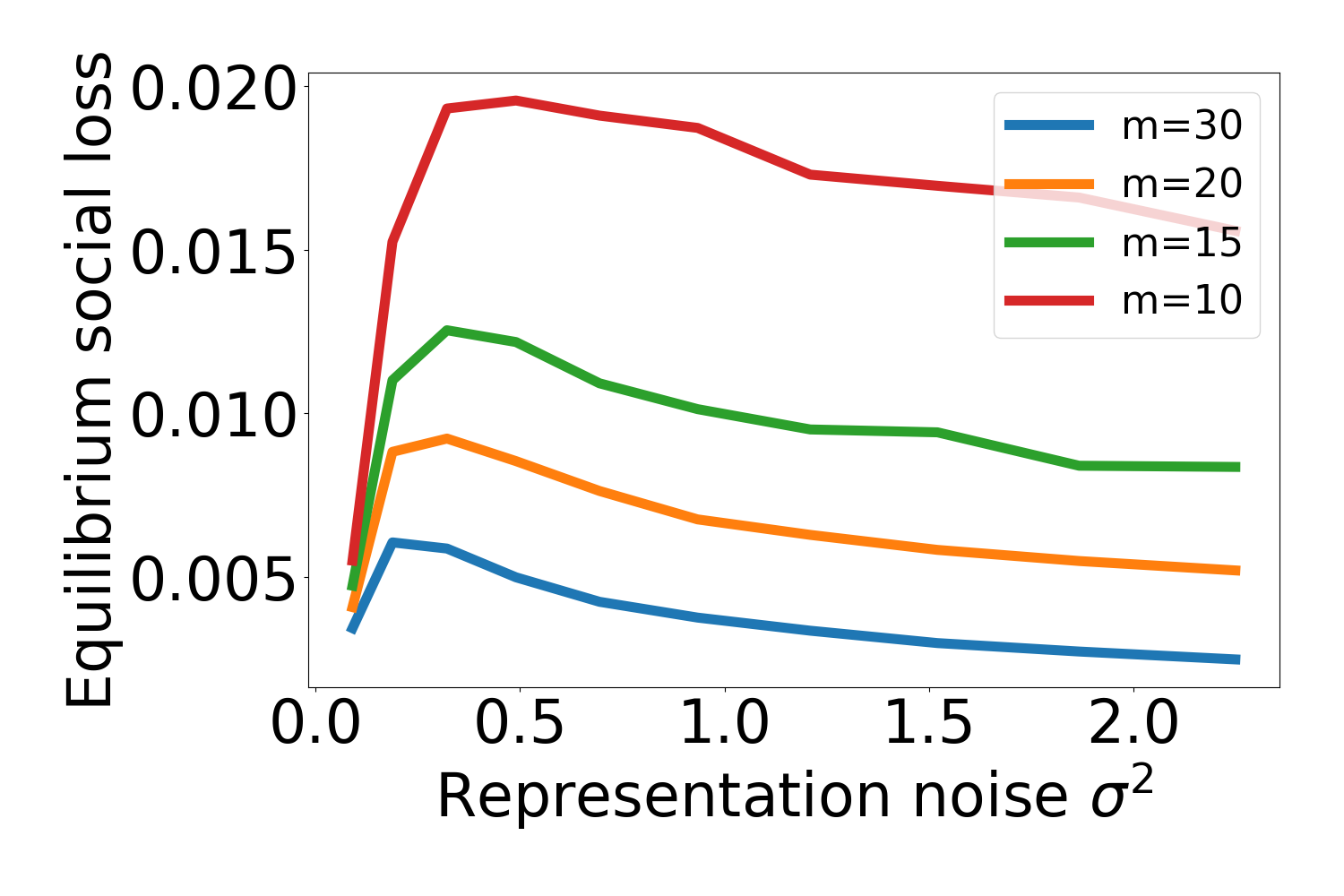}
        \caption{}
        \label{fig:theory-noise}
    \end{subfigure}
    \begin{subfigure}[b]{0.32\textwidth}
        \includegraphics[scale=0.12]{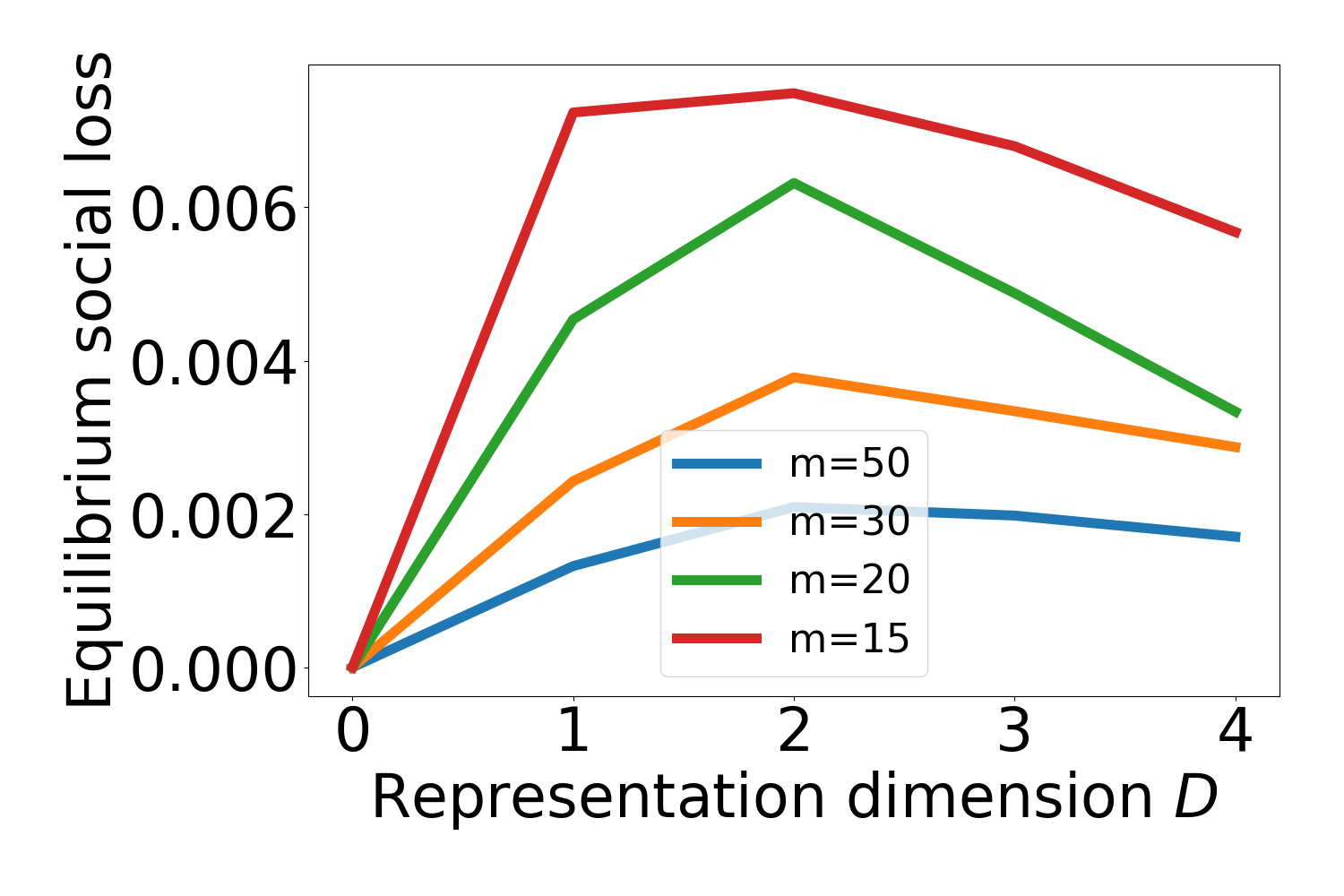}
        \caption{}
        \label{fig:theory-dimension}
    \end{subfigure}
    \begin{subfigure}[b]{0.32\textwidth}
        \includegraphics[scale=0.12]{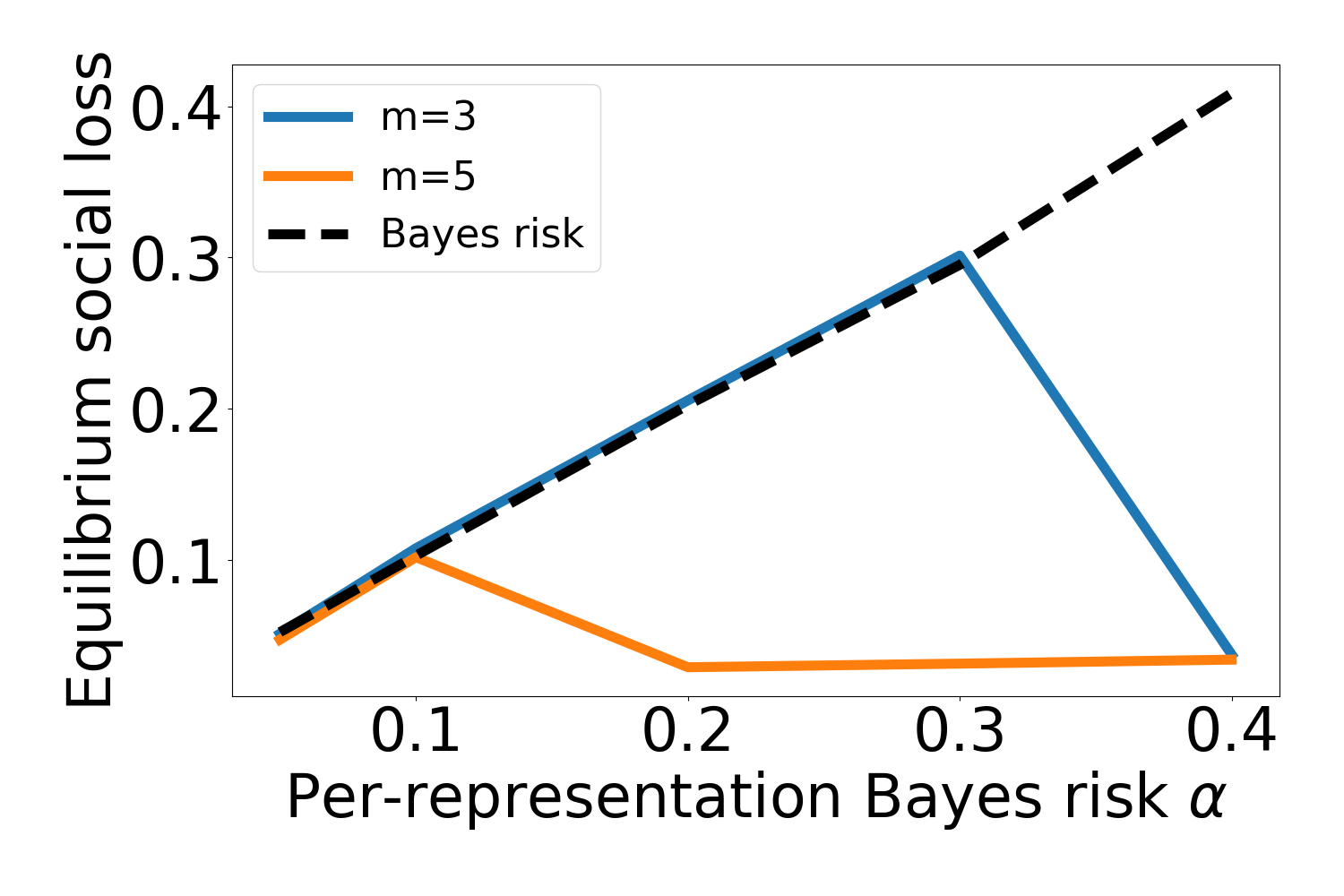}
        \caption{}
        \label{fig:simulations-bayes}
    \end{subfigure}
    \begin{subfigure}[b]{0.32\textwidth}
        \includegraphics[scale=0.12]{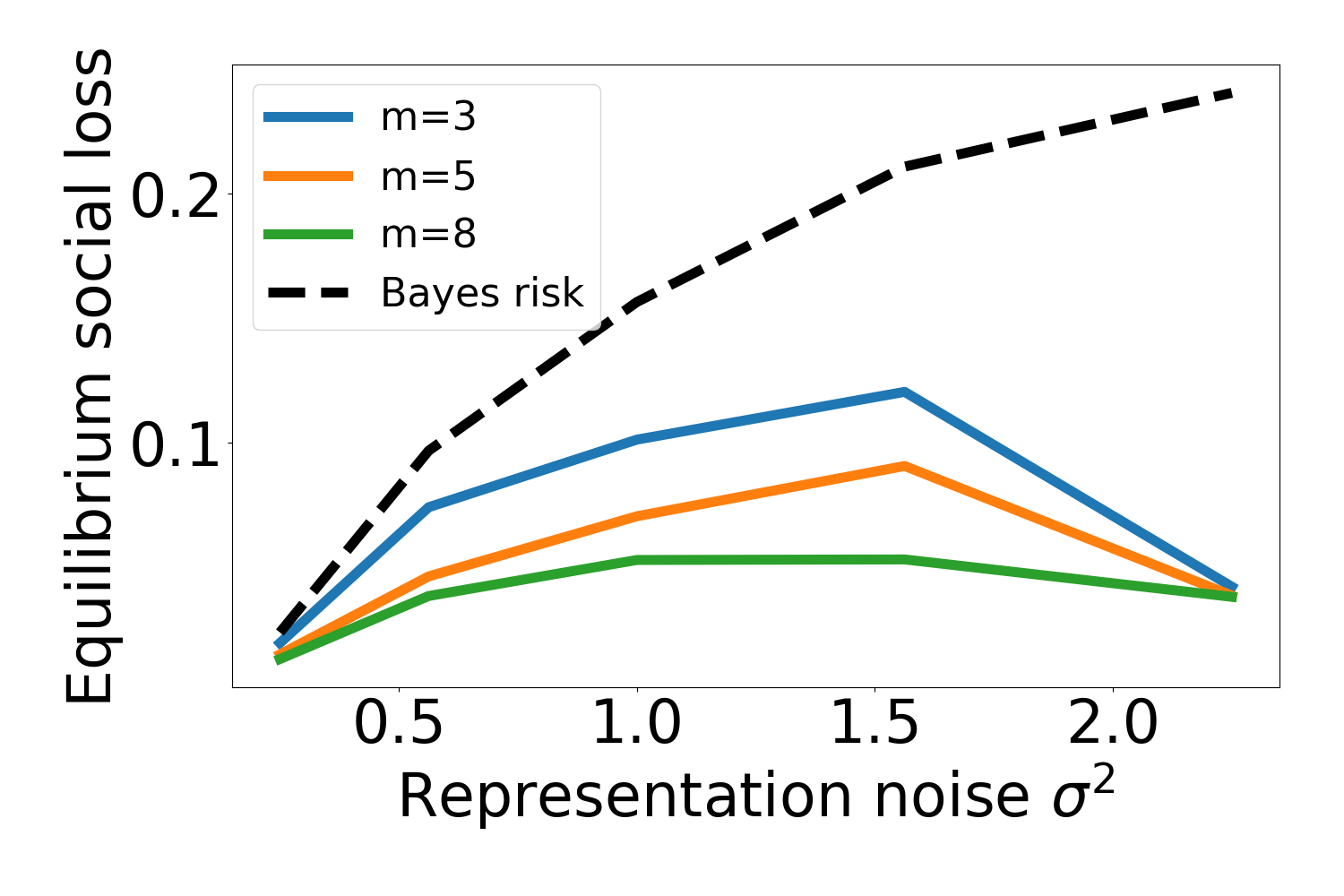}
        \caption{}
        \label{fig:simulations-noise}
    \end{subfigure}
    \begin{subfigure}[b]{0.32\textwidth}
        \includegraphics[scale=0.12]{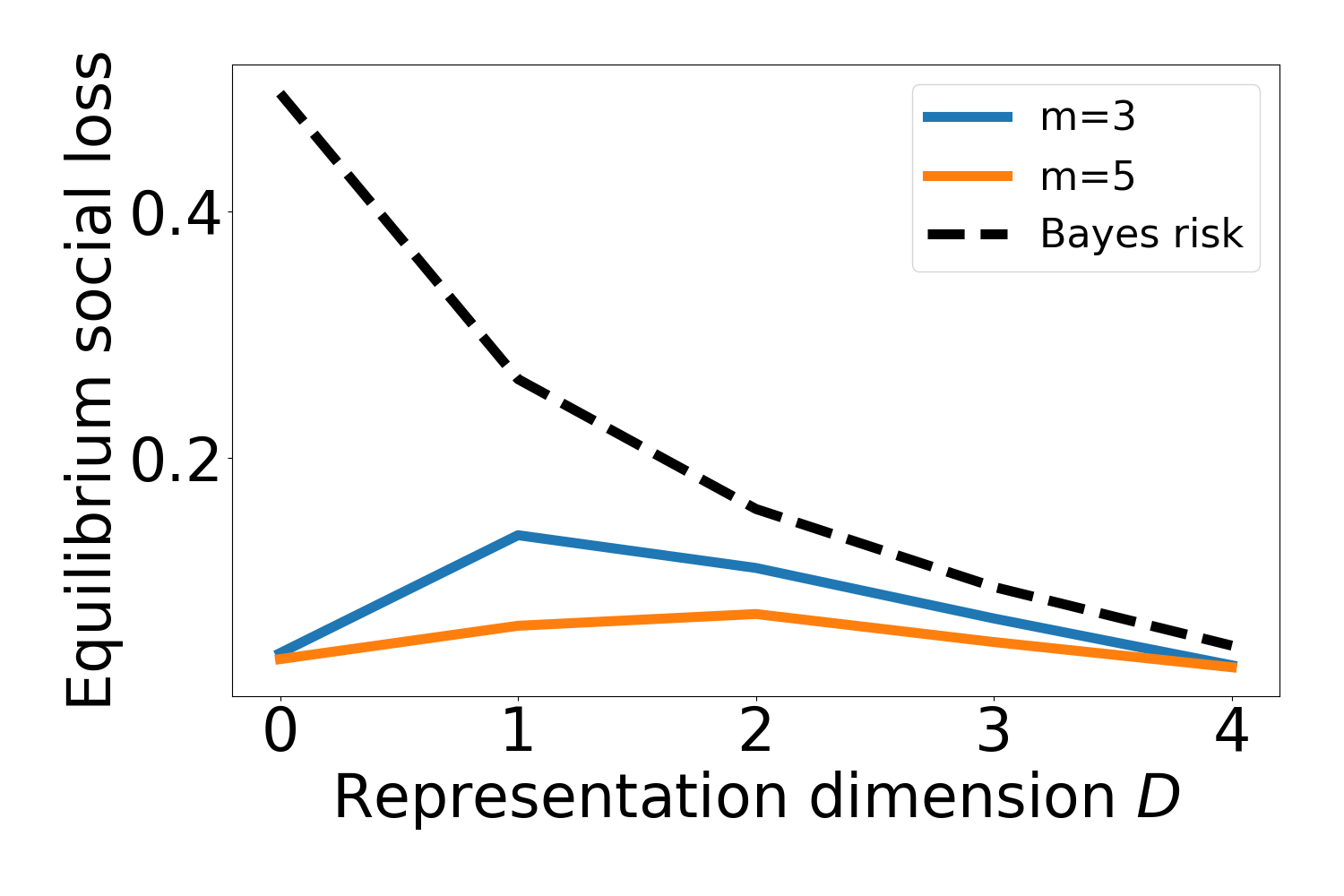}
        \caption{}
        \label{fig:simulations-dimension}
    \end{subfigure}
    \caption{Equilibrium social loss (y-axis) versus data representation quality (x-axis) given $m$ model-providers, for different function classes $\mathcal{F}$ (rows) and when representations are varied along different aspects (columns). Top row: $\mathcal{F} = \mathcal{F}^{\text{binary}}_{\text{all}}$, with closed-form formula from Proposition \ref{prop:idealized}. Bottom row: linear functions, computed via simulation (Section \ref{sec:linear}). We vary representations with respect to per-representation Bayes risk (a,d), noise level (b,e), and dimension (c,f). The dashed line indicates the Bayes risk (omitted if it is too high to fit on the axis). The Bayes risk is monotone, but the equilibrium social loss is non-monotone.}
    \label{fig:main}
\end{figure}

Using Proposition~\ref{prop:idealized}, we next vary representations along several axes and compute the equilibrium social loss, observing non-monotonicity in each case.

\paragraph{Setting 1: Varying the per-representation Bayes risks.} Consider a population with a single value of $x$ that has Bayes risk $\alpha(x) = \alpha$. We vary representation quality by varying $\alpha$ from $0$ to $0.5$. 
Figure \ref{fig:theory-bayes} depicts the result: by Proposition \ref{prop:idealized}, the equilibrium social loss is zero if $\alpha >1/m$ and is $\alpha$ if $\alpha < 1/m$, leading to  
non-monotonicity at $\alpha = 1/m$. When there are $m \ge 3$ model-providers, the equilibrium social loss is thus non-monotonic in $\alpha$. (For $m = 2$, where $\alpha = 1/2$ is the maximum possible per-representation Bayes risk, the equilibrium social loss is monotone in $\alpha$.) As the number of model-providers increases, the non-monotonicity occurs at a higher data representation quality (a lower Bayes risk). 

\paragraph{Setting 2: Varying the representation noise.} Consider a one-dimensional population given by a mixture of two Gaussians (one for each class), where each Gaussian has variance $\sigma^2$ (see Appendix \ref{appendix:setup} for the details of the setup). We vary the parameter $\sigma$ to change the representation quality. Intuitively, a lower value of $\sigma$ makes the Gaussians more well-separated, which improves representation quality (Bayes risk). By Proposition \ref{prop:idealized}, the equilibrium social loss is $\E \left[\alpha(x) \cdot \mathbbm{1}[\alpha(x) < 1/m] \right]$. For each value of $\sigma$, we estimate the equilibrium social loss by sampling representations $x$ from the population and taking an average.\footnote{Strictly speaking, we can't directly apply Proposition \ref{prop:idealized} to this setup since $X$ is infinite. We circumvent this issue by applying Proposition \ref{prop:idealized} on a sample of the representations.} 
Figure \ref{fig:theory-noise} depicts the result: the equilibrium social loss is non-monotonic in $\sigma$ (and thus the Bayes risk). Again, as the number of model-providers increases, the non-monotonicity occurs at a higher representation quality (a lower Bayes risk).  

\paragraph{Setting 3: Varying the representation dimension.} We consider a four-dimensional population $(X^{\text{all}}, Y)$, and let the representation $X$ consist of the first $D$ coordinates of $X^{\text{all}}$, for $D$ varying from $0$ to $4$ (see Appendix \ref{appendix:setup} for full details). Intuitively, a higher dimension $D$ makes the representations more informative, thus improving representation quality (Bayes risk). As before, for each value of $D$, we estimate the equilibrium social loss by sampling representations $x$ from the population and taking an average. Figure \ref{fig:theory-dimension} depicts the result: the equilibrium social loss is once again non-monotonic in the representation dimension $D$ (and thus the Bayes risk). 

\paragraph{Discussion.} 
Settings 1-3 illustrate that equilibrium social loss can be non-monotonic in Bayes risk when representations are improved along many qualitatively different axes. The intuition is that varying representations along these axes can increase the values of $\alpha(x)$ for inputs $x$; by Proposition \ref{prop:idealized}, these changes to $\alpha(x)$ can lead to non-monotonicity in the equilibrium social loss. We will revisit Settings 1-3 for richer market structures (Section \ref{subsec:unequal}) and for linear predictors and noisy user decisions (Section \ref{subsec:synthetic}). 

\subsection{Generalization to more than 2 classes}\label{subsec:idealizedmulticlass}

While our analysis has thus far focused on classification with $K = 2$ classes, the number of classes $K$ can be much larger in practice. As a motivating example, consider content recommendation tasks where each class represents a different genre of content; since the content landscape can be quite diverse, we would expect $K$ to be fairly large.\footnote{When $K$ is large, even if users can ``search'' for and ``consume'' content on their own without relying on model-provider predictions, we expect that our measure of social loss would still be a good proxy for the loss experienced by users. In particular, it would be prohibitively expensive for users to try out all $K$ classes, so classes that are not suggested to the user by any model-provider's predictions might be effectively inaccessible to the user.} This motivates us to extend our theoretical characterization in Proposition \ref{prop:idealized} to classification with $K \ge 2$ classes.

For the case of $K \ge 2$ classes, the appropriate  analogue of the per-representation Bayes risk is the \text{per-class-per-representation Bayes risk}, defined to be:
\begin{equation}
\label{eq:alphabayesclass}
 \alpha^i(x) := \Prob(y = i \mid x) 
\end{equation}
for each $x \in X$ and $i \in \left\{0,1,\ldots, K-1\right\}$. Observe that $1 - \max_{0 \le i \le K-1} \alpha^i(x)$ is the minimum error that a single model-provider can hope to achieve on $x$, and $\texttt{OPT}_{\text{single}}$ is equal to the average value $\E[1 - \max_{0 \le i \le K-1} \alpha^i(x)]$ across the population. The equilibrium social loss, however, depends on other aspects of the $\alpha^i(x)$ values. 

We characterize the equilibrium social loss in terms of the per-class-per-representation Bayes risks in the following proposition. Our characterization again focuses on  pure-strategy equilibria, which are guaranteed to exist in this setup by Proposition \ref{prop:existence}.
\begin{proposition}
\label{prop:idealizedmulticlass}
Let $X$ be a finite set, let there be $K \ge 2$ classes, let $\mathcal{F} = \mathcal{F}^{\text{multi-class}}_{\text{all}}$. Suppose that user decisions are noiseless (i.e., user decisions are given by \eqref{eq:userchoicespecific}). Let $c = \min_{x \in X} \max_{0 \le i \le K-1} \alpha^i(x)$. Then, at any pure strategy Nash equilibrium $f^*_1, \ldots, f^*_m$, the social loss $\texttt{SL}(f^*_1, \ldots, f^*_m)$ is bounded as
\begin{equation}
\label{eq:sociallossmulticlass} 
\E_{(x,y) \sim \mathcal{D}} \left[\sum_{i=1}^K \alpha^i(x) \cdot \mathbbm{1}\left[\alpha^i(x) < \frac{c}{m}\right] \right] \le \texttt{SL}(f^*_1, \ldots, f^*_m) \le  \E_{(x,y) \sim \mathcal{D}} \left[\sum_{i=1}^K \alpha^i(x) \cdot \mathbbm{1}\left[\alpha^i(x) \le \frac{1}{m} \right] \right]. 
\end{equation}
\end{proposition}

The high-level intuition for Proposition \ref{prop:idealizedmulticlass} is similar to the intuition for Proposition \ref{prop:idealized}, except that each class needs to be considered separately. In particular, when class $i$ occurs sufficiently frequently for the representation $x$ (i.e., when $\alpha^i(x)$ is not too small), then some model-provider will label $x$ as $i$; on the other hand, if the class $i$ occurs very infrequently for $x$, then no model-provider will label $x$ as $i$. We defer a proof of Proposition \ref{prop:idealizedmulticlass} to Appendix \ref{appendix:theory}.

While Proposition \ref{prop:idealizedmulticlass} is conceptually a generalization of Proposition \ref{prop:idealized}, the details of Proposition \ref{prop:idealizedmulticlass} slightly differ. In particular, Proposition \ref{prop:idealizedmulticlass} does not completely pin down the equilibrium social loss, and there is a factor of $c$ slack in the constraint on each $\alpha^i(x)$ in \eqref{eq:sociallossmulticlass} between the upper and lower bounds. Nonetheless, since the value $c = \min_{x \in X} \max_{0 \le i \le K-1} \alpha^i(x)$ measures the minimum accuracy of the Bayes optimal predictor across all inputs $x$, we expect that ``reasonable'' representations (i.e., representations which are sufficiently informative) would have $c$ equal to a constant. When $c$ is a constant, there is at most a constant factor slack in the $\alpha^i(x)$ constraints in \eqref{eq:sociallossmulticlass} between upper and lower bound. 

For similar reasons to Proposition \ref{prop:idealized}, Proposition \ref{prop:idealizedmulticlass}  implies that the equilibrium social loss can be non-monotonic in the representation quality (i.e., the Bayes risk). As a concrete example, consider the following adaptation of Setting 1 in Section \ref{subsec:idealized}: let there be a population with a single value of $x$ where $\alpha_0(x) = 1 - 2\alpha$, $\alpha_1(x) = \alpha$, and $\alpha_2(x) = \alpha$ for some $\alpha < 1/4$. In this setup, we see that $c \ge 1/2$. By Proposition \ref{prop:idealizedmulticlass}, the equilibrium social loss is $2 \alpha$ if $\alpha < 1/(2m)$, and the equilibrium social loss is $0$ if $\alpha > 1/m$; on the other hand, the Bayes risk is equal to $2 \alpha$ for any $\alpha < 1/4$. This illustrates that the equilibrium social loss is non-monotonic in the Bayes risk. We expect that other setups similar to those in Section \ref{subsec:idealized} will also lead to non-monotonicity for multi-class tasks.

\subsection{Generalization to unequal market reputations}\label{subsec:unequal}
While we assumed above that users evenly break ties between model-providers, in reality, users might be more likely to choose model-providers with a higher market reputation (e.g., established, popular model-providers). This motivates us to incorporate market reputations into user decisions. 

Formally, we assign to each model-provider $j$ a \textit{market reputation} $w_j$, and we replace the logit model in \eqref{eq:userchoice} with a {weighted} logit variant. 
When $c\to 0$, rather than breaking ties uniformly, they are instead broken proportionally to $w_j$: 
\begin{equation}
    \label{eq:userdecisionsmr}
    \Prob[j^*(x,y) = j] = 
\begin{cases}
0 & \text{ if } j \not\in \argmin_{j' \in [m]} \mathbbm{1}[y \neq f_{j'}(x)] \\
\frac{w_j}{\sum_{j'' \in [m]} w_{j''} \cdot \mathbbm{1}[j'' \in \argmin_{j' \in [m]} \mathbbm{1}[y \neq f_{j'}(x)]]}
& \text{ if } j \in \argmin_{j' \in [m]} \mathbbm{1}[y \neq f_{j'}(x)].
\end{cases}
\end{equation} 
See Section \ref{sec:modeldiscussion} for further discussion of this model. For simplicity, we assume that market reputations are normalized to sum to one. 

Similarly to Proposition \ref{prop:idealized}, we derive a closed-form formula for the equilibrium social loss, focusing on the case of binary classification with $m = 2$ model-providers for analytic tractability. We observe non-monotonicity as before, but with a more complex functional form. 
\begin{proposition}
\label{prop:2model-providers}
Let $X$ be a finite set, let $K = 2$, and let $\mathcal{F} = \mathcal{F}^{\text{binary}}_{\text{all}}$. Suppose there are $m=2$ model-providers with market reputations $w_{\text{min}}$ and $w_{\text{max}}$, where $w_{\text{max}} \ge w_{\text{min}}$ and $w_{\text{max}} +  w_{\text{min}} = 1$. Suppose that user decisions are given by \eqref{eq:userdecisionsmr}, and that $\alpha(x) \neq w_{\text{min}}$ for all $x \in X$.\footnote{As with Proposition~\ref{prop:idealized}, when $\alpha(x)$ is equal to $w_{\text{min}}$ for some value of $x$, there are multiple equilibria.} At any (mixed) Nash equilibrium $(\mu_1, \mu_2)$, the expected social loss  is equal to: 

\begin{equation}
\label{eq:2model-providerssocialloss} 
\E_{\substack{f_1 \sim \mu_1 \\ f_2 \sim \mu_2}}[\texttt{SL}(f_1, f_2)] = \E_{(x, y) \sim \mathcal{D}} \left[\underbrace{\frac{(\alpha(x) - w_{\text{min}}) \cdot (w_{\text{max}}  - \alpha(x))}{(1 - 2 \cdot w_{\text{min}})^2}}_{(A)} \cdot \mathbbm{1}[\alpha(x) > w_{\text{min}} ]  + \underbrace{\alpha(x)}_{(B)} \cdot \mathbbm{1}[\alpha(x) < w_{\text{min}} ]  \right]. 
\end{equation}
\end{proposition}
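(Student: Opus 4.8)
The plan is to exploit the product structure of $\mathcal{F}_{\text{all}}$ to decompose the game across representations, reducing the analysis to a single-representation two-action game, and then to compute its equilibrium and the associated social loss in closed form. First I would observe that because $\mathcal{F}_{\text{all}}$ lets each provider choose its prediction at each $x$ independently, and because both a provider's utility $u(f_j; \mathbf{f}_{-j})$ and the social loss $\texttt{SL}$ are expectations over $x\sim\mathcal{D}$ of quantities depending only on the predictions $(f_1(x), f_2(x))$ at that single $x$, the game separates. Concretely, for a product mixed-strategy profile $(\mu_1,\mu_2)$ each provider's utility is $\sum_x \Prob[x]$ times the per-$x$ expected market share under the marginals $(\mu_1|_x,\mu_2|_x)$, and a provider can realize any vector of marginals by randomizing independently across $x$. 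Hence $(\mu_1,\mu_2)$ is a Nash equilibrium iff for every $x$ the marginals form a Nash equilibrium of the per-$x$ game, and $\texttt{SL}$ depends only on these marginals. This reduces everything to analyzing, for each fixed $x$ with $\alpha := \alpha(x)$, a two-player game in which each provider picks either the Bayes-optimal label (call it $O$) or the other label ($N$).

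Next I would write out the per-$x$ payoff matrix. Using the weighted tie-breaking rule \eqref{eq:userdecisionsmr}, the provider with reputation $w_{\max}$ receives market share $w_{\max}$ when both play the same action (a tie on every realized label), $1-\alpha$ when it alone plays $O$, and $\alpha$ when it alone plays $N$; the other provider's shares are the complements, so the game is constant-sum. The crucial simplification for the social loss is that the user's realized loss at $x$ always equals the minimum of the two providers' losses (since the chosen provider lies in the $\argmin$), so the social loss at $x$ equals $\Prob[\text{both providers mispredict}]$, which is $(1-\alpha)\cdot\Prob[\text{both play }N] + \alpha\cdot\Prob[\text{both play }O]$.

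Then I would split into the two regimes. When $\alpha < w_{\min}$, a short deviation check shows $(O,O)$ is the unique equilibrium, since a deviator's share would become $\alpha < w_{\min} \le w_j$; this gives social loss $\alpha$, which is term (B). When $\alpha > w_{\min}$, I would verify that no pure profile survives (each admits a profitable deviation, using $\alpha \le 1/2 \le w_{\max}$ and $\alpha > w_{\min} = 1-w_{\max}$), so the unique equilibrium is fully mixed; solving the two indifference conditions yields that provider $w_{\max}$ plays $O$ with probability $s = (w_{\max}-\alpha)/(w_{\max}-w_{\min})$ and provider $w_{\min}$ plays $O$ with probability $t = (\alpha-w_{\min})/(w_{\max}-w_{\min})$, which satisfy $s+t=1$. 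Substituting into the social-loss expression and using $1-s=t$, $1-t=s$ collapses it to $st = (\alpha-w_{\min})(w_{\max}-\alpha)/(w_{\max}-w_{\min})^2$, and since $w_{\max}-w_{\min}=1-2w_{\min}$ this is exactly term (A). Taking the expectation over $x\sim\mathcal{D}$ and stitching the regimes with the indicators $\mathbbm{1}[\alpha(x)>w_{\min}]$ and $\mathbbm{1}[\alpha(x)<w_{\min}]$ yields \eqref{eq:2model-providerssocialloss}; the assumption $\alpha(x)\neq w_{\min}$ is what guarantees a unique equilibrium value at each $x$ and keeps the regimes disjoint.

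The main obstacle I anticipate is not the per-$x$ computation, which is a routine $2\times 2$ constant-sum analysis, but rigorously justifying the decomposition for mixed strategies: I must argue that an arbitrary (possibly correlated-across-$x$) equilibrium strategy has per-$x$ marginals that themselves form per-$x$ equilibria, and that both utility and social loss depend on the profile only through these marginals, so that correlation across representations is irrelevant. Establishing uniqueness of the per-$x$ equilibrium value—ruling out pure equilibria in the $\alpha > w_{\min}$ regime via the non-degeneracy assumption $\alpha(x)\neq w_{\min}$—is the secondary technical point needed to support the \emph{at any Nash equilibrium} claim.
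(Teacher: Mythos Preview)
Your proposal is correct and follows essentially the same route as the paper: decompose across representations via the product structure of $\mathcal{F}_{\text{all}}$ (the paper isolates this as a separate lemma for mixed strategies), write the $2\times 2$ per-$x$ game matrix, handle $\alpha(x)<w_{\min}$ by showing $(O,O)$ is the unique equilibrium, and handle $\alpha(x)>w_{\min}$ by ruling out pure equilibria and solving the indifference conditions to obtain mixing probabilities $s,t$ with $s+t=1$ and social loss $st$. Your observation that the per-$x$ social loss equals the probability both providers mispredict is a slightly cleaner framing than the paper's direct computation, but the argument is otherwise identical, including the decomposition concern you flag as the main obstacle.
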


\normalsize{}

The high-level intuition for Proposition \ref{prop:2model-providers}, like for Proposition \ref{prop:idealized}, is that the equilibrium predictions go from heterogeneous to homogenous as $\alpha(x)$ decreases. Term (A), which is realized for large $\alpha(x)$, captures the equilibrium social loss for heterogeneous predictions. Term (B), which is realized for small $\alpha(x)$, captures the equilibrium social loss for homogeneous predictions. We defer the proof of Proposition \ref{prop:2model-providers} to Appendix \ref{appendix:theory}.

\begin{figure}[t]
    \centering
    \begin{subfigure}[b]{0.32\textwidth}
        \includegraphics[scale=0.12]{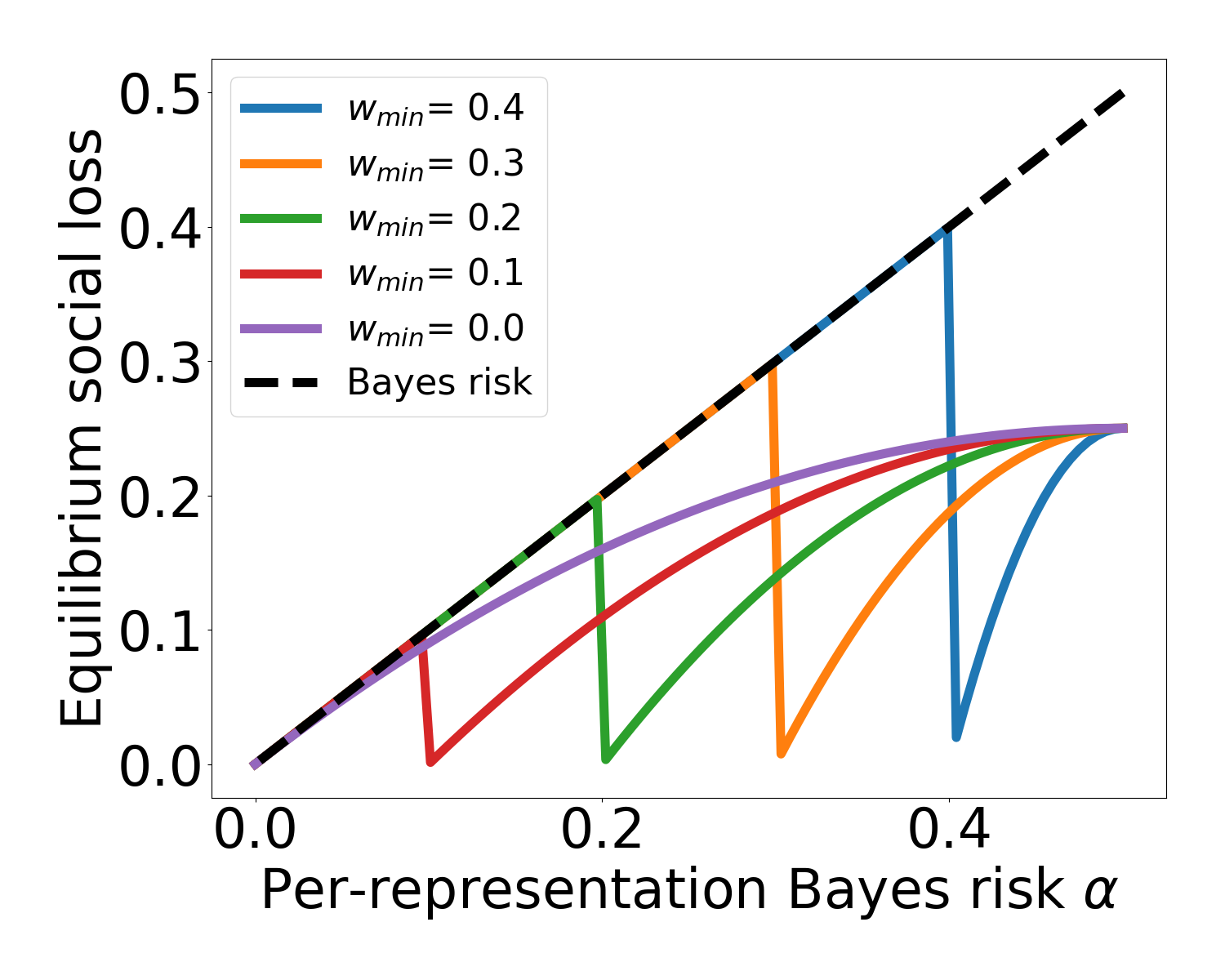}
        \caption{}
        \label{fig:theory-bayes-varyw}
    \end{subfigure}
    \begin{subfigure}[b]{0.32\textwidth}
        \includegraphics[scale=0.12]{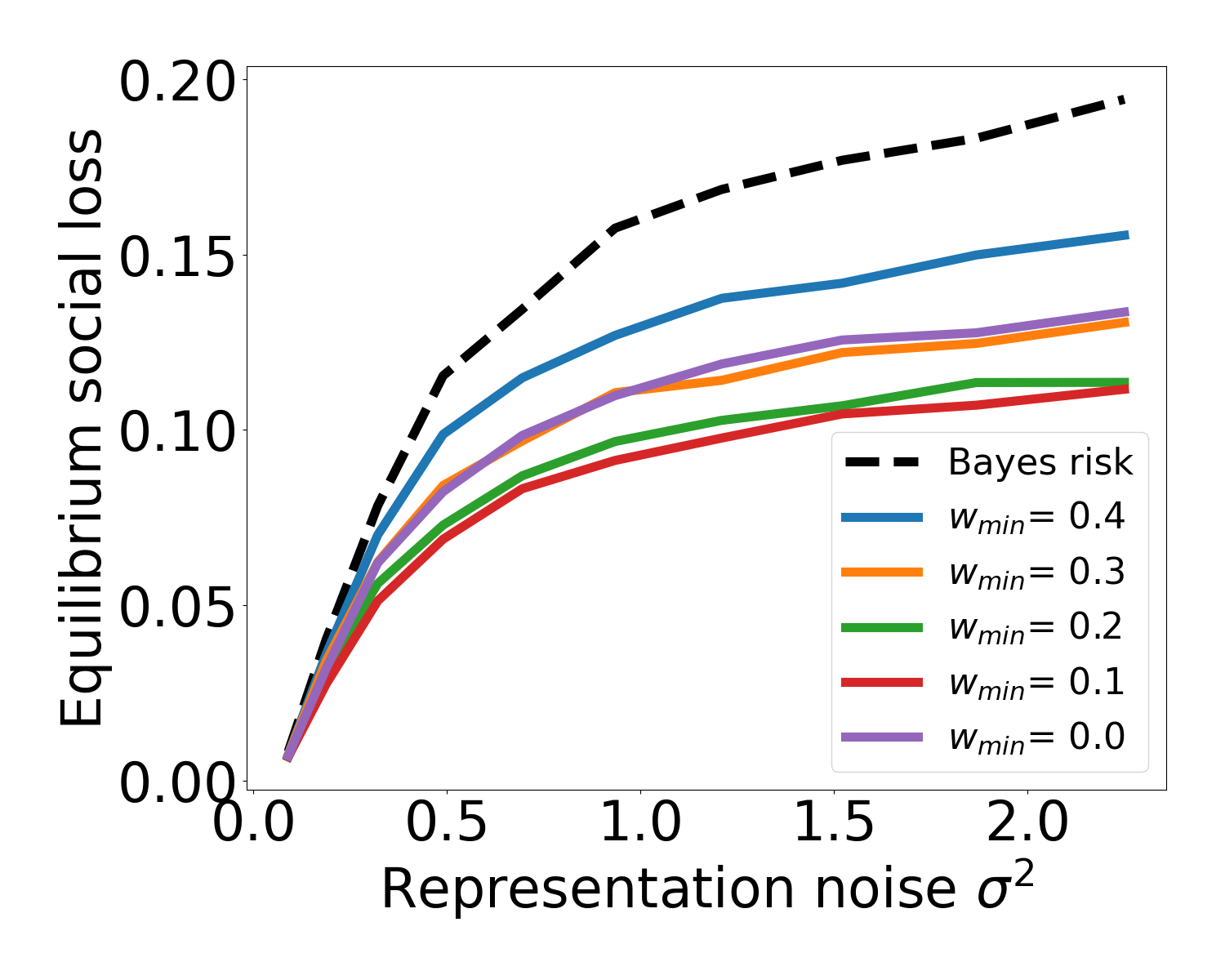}
        \caption{}
        \label{fig:theory-noise-varyw}
    \end{subfigure}
    \begin{subfigure}[b]{0.32\textwidth}
        \includegraphics[scale=0.12]{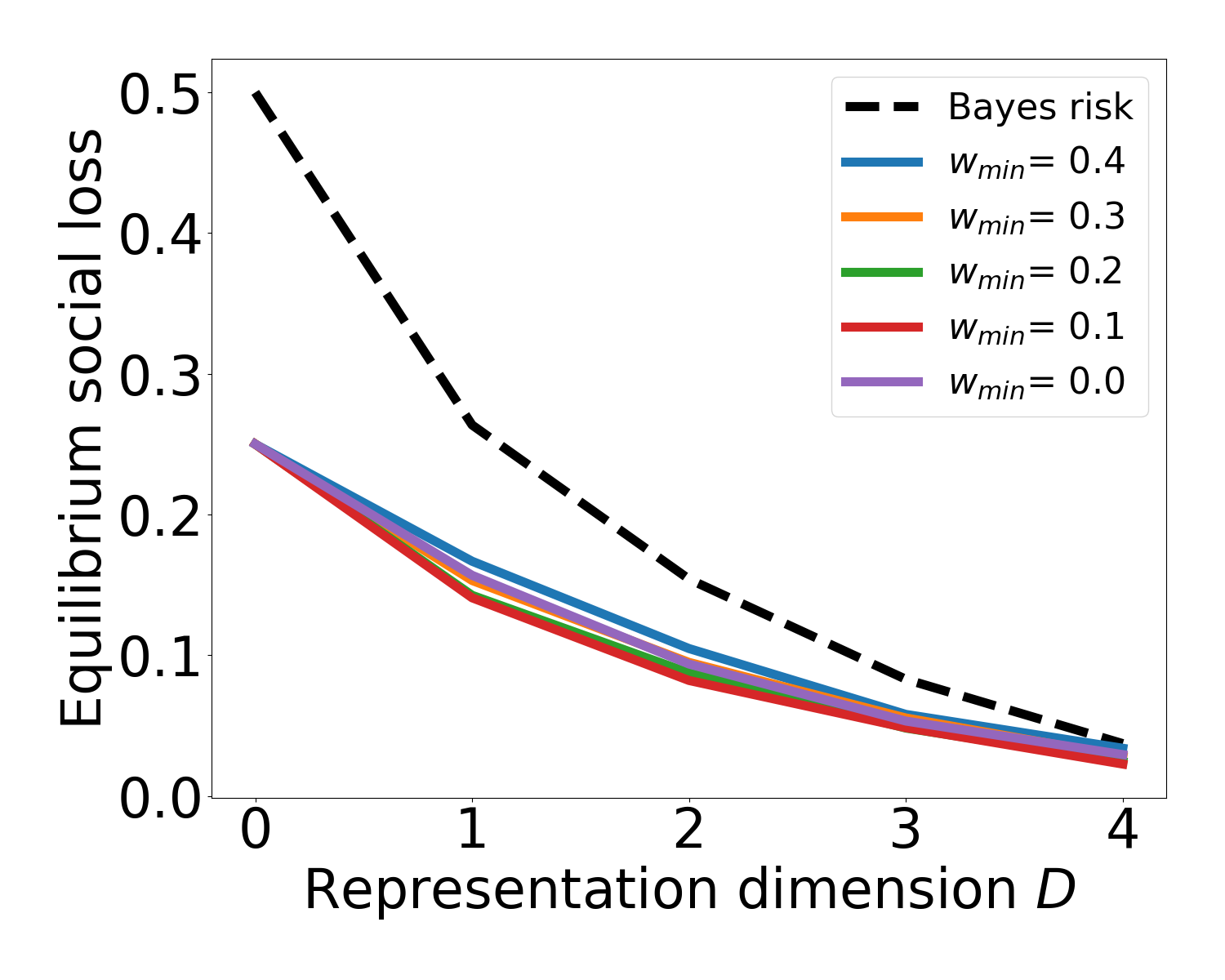}
        \caption{}
        \label{fig:theory-dimension-varyw}
    \end{subfigure}
    \caption{Equilibrium social loss (y-axis) versus data representation quality (x-axis) given two model-providers with market reputations $[1 - w_{\text{min}}, w_{\text{min}}]$ when representations are varied along different aspects (columns). The equilibrium social loss is computed via the closed-form formula from Proposition \ref{prop:2model-providers}. We vary representations with respect to per-representation Bayes risk (a), noise level (b), and dimension (c). The dashed line indicates the Bayes risk. The Bayes risk is monotone for all 3 axes of varying representations; on the other hand, the equilibrium social loss is non-monotone in the per-representation Bayes risk and monotone in noise level and dimension.}
    \label{fig:2model-providers}
\end{figure}

The details of Proposition \ref{prop:2model-providers} differ from Proposition \ref{prop:idealized} in several ways. First, the transition point from heterogeneous to homogeneous predictions occurs at $\alpha(x) = w_{\text{min}}$ as opposed to $\alpha(x) = 1/2$. In particular, the transition point depends on the market reputations rather than only the number of model-providers. Second, the equilibria have \textit{mixed strategies} rather than pure strategies, because pure-strategy equilibria do not necessarily exist when market reputations are unequal (see Lemma \ref{lemma:nonexistence} in Appendix \ref{appendix:theory}). Third, the social loss at a representation $x$ is no longer equal to zero for heterogeneous predictions---in particular, term (A) is now positive for all $\alpha(x) >  w_{\text{min}}$ and increasing in $\alpha(x)$.

To better understand the implications of Proposition \ref{prop:2model-providers}, we revisit Settings 1-3 from Section \ref{subsec:examples}, considering the same three axes of varying representations with the same distributions over $(x,y)$. In contrast to Section \ref{subsec:examples}, we consider 2 competing model-providers with unequal market positions rather than $m$ competing model providers with equal market positions. Our results, described below, are depicted in Figure \ref{fig:2model-providers}. 

\paragraph{Setting 1: Varying the per-representation Bayes risks.} Consider the same setup as Setting 1 in Section \ref{subsec:examples}. Figure \ref{fig:theory-bayes-varyw} depicts the non-monotonicity of the equilibrium social loss in the per-representation Bayes risk $\alpha$ across different settings of market reputations for 2 competing model-providers. The discontinuity occurs at the smaller market reputation $w_{\text{min}}$. Thus, as the market reputations of the 2 model-providers become closer together, the non-monotonicity occurs at a lower data representation quality (higher Bayes risk). 

\paragraph{Settings 2-3: Varying the representation noise or representation dimension.} Consider the setups from Settings 2-3 in Section \ref{subsec:examples}.  Figures \ref{fig:theory-noise-varyw}-\ref{fig:theory-dimension-varyw} depicts that the equilibrium social loss is \textit{monotone} in data representation quality (Bayes risk) across different settings of market reputations for 2 competing model-providers.

\begin{figure}[t]
    \centering
    \begin{subfigure}{0.49\textwidth}
        \includegraphics[scale=0.32]{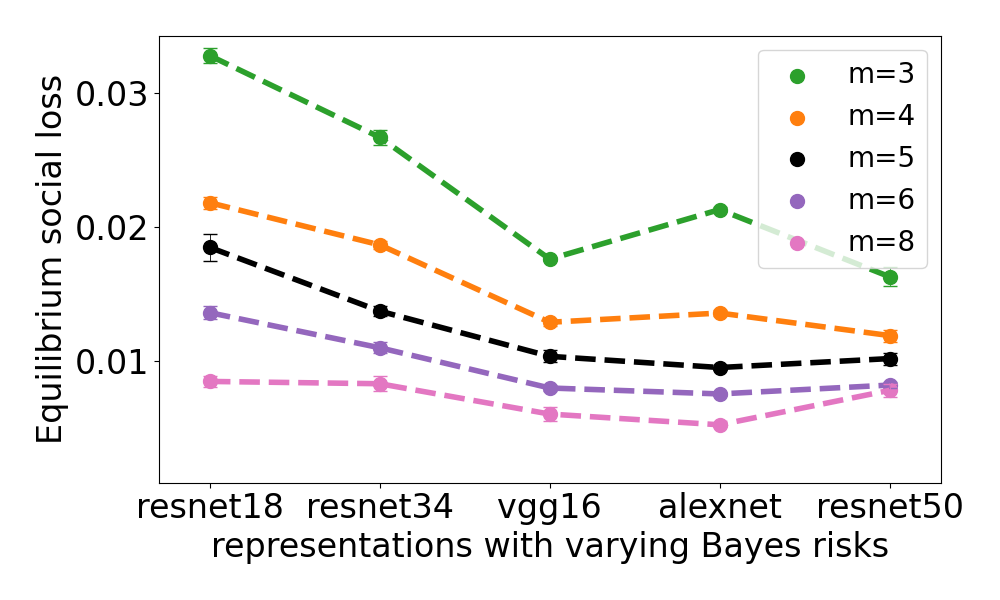}
        \caption{}
        \label{fig:CIFAR-10}
    \end{subfigure}
    \begin{subfigure}{0.49\textwidth}
        \includegraphics[scale=0.32]{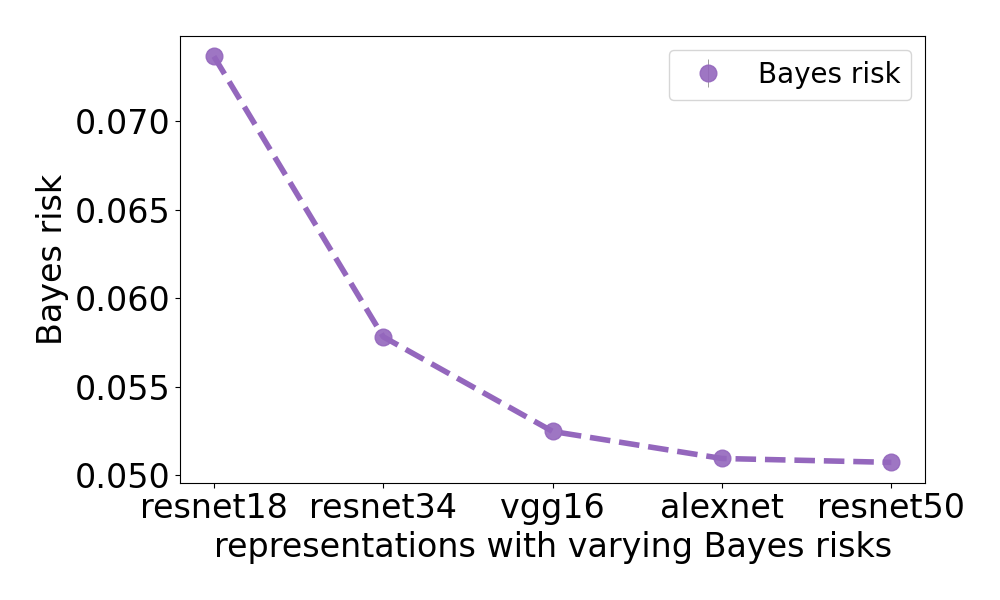}
        \caption{}
        \label{fig:bayesoptcifar}
    \end{subfigure}
    \caption{ Equilibrium social loss (left) and Bayes risk (right) on a binary classification task on CIFAR-10 (Section \ref{subsec:image}). Representations are generated from different networks pre-trained on ImageNet. The points show the equilibrium social loss when $m$ model-providers compete with each other (left) and the Bayes risk of a single model-provider in isolation (right). While Bayes risk is decreasing in this representation ordering, the equilibrium social loss is non-decreasing in this ordering. 
    The equilibrium social loss is thus non-monotonic in representation quality as measured by Bayes risk. Error bars are 1 standard error.
   }
    \label{fig:other}
\end{figure}

\paragraph{Discussion.} To interpret these results, observe that for 2 model-providers with equal market reputations ($w_{\text{min}} = 0.5$), the equilibrium social loss is always equal to the Bayes risk by Propositions \ref{prop:idealized}-\ref{prop:2model-providers}, which trivially implies monotonicity. In contrast, Figure \ref{fig:2model-providers} shows that for unequal market positions ($w_{\text{min}} < 0.5$), the equilibrium social loss is non-monotonic in Bayes risk for Setting 1, though it is still monotonic in Bayes risk for Settings 2 and 3. (For comparison, recall from Figures \ref{fig:theory-bayes}-\ref{fig:theory-dimension} that for $m \gg 2$ model-providers with equal market reputations, non-monotonicity was exhibited for all three settings.) An interesting open question is identify other axes of varying representations, beyond Setting 1, which lead to non-monotonicity for 2 model-providers with unequal market reputations.

\begin{figure}[t]
    \centering
    \begin{subfigure}{0.49\textwidth}
        \includegraphics[scale=0.32]{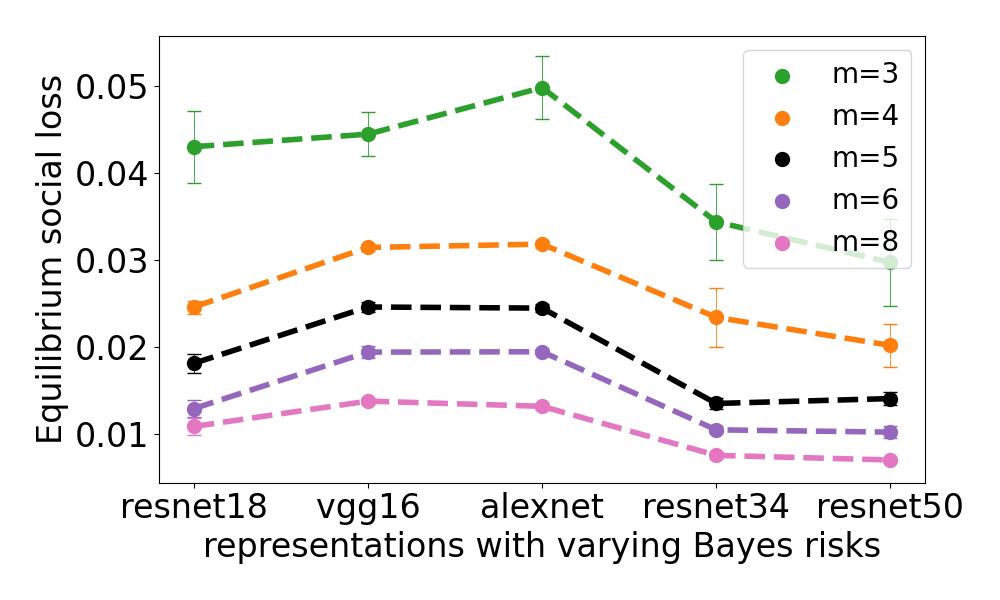}
        \caption{}
        \label{fig:CIFAR-1010class}
    \end{subfigure}
    \begin{subfigure}{0.49\textwidth}
        \includegraphics[scale=0.32]{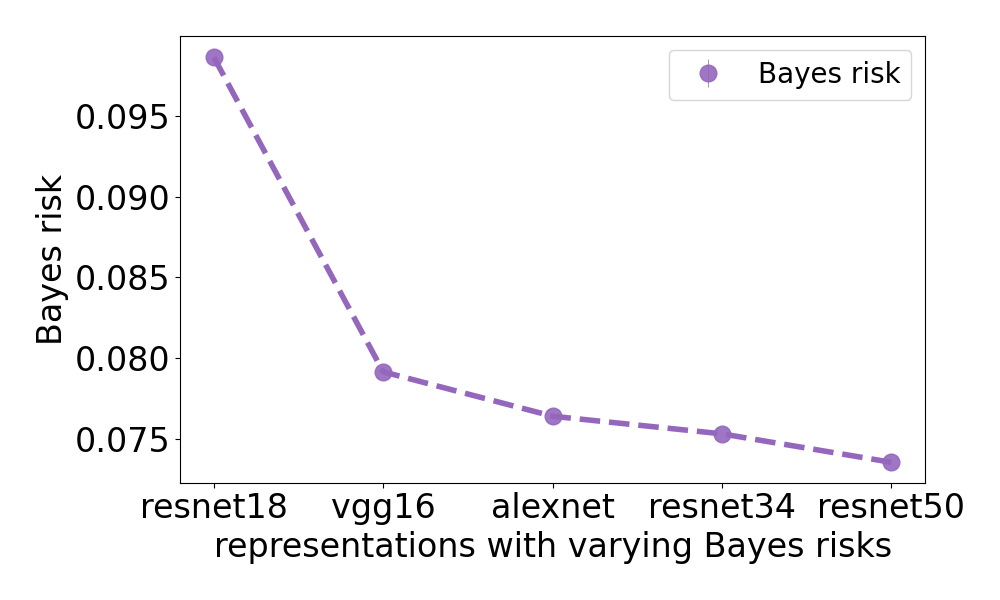}
        \caption{}
        \label{fig:bayesoptcifar10class}
    \end{subfigure}
    \caption{ Equilibrium social loss (left) and Bayes risk (right) on a 10-class classification task on CIFAR-10 (Section \ref{subsec:image10}). Representations are generated from different networks pre-trained on ImageNet. The points show the equilibrium social loss when $m$ model-providers compete with each other (left) and the Bayes risk of a single model-provider in isolation (right). While Bayes risk is decreasing in this representation ordering, the equilibrium social loss is non-decreasing in this ordering. 
    The equilibrium social loss is thus non-monotonic in representation quality as measured by Bayes risk. Error bars are 1 standard error.
   }
    \label{fig:10class}
\end{figure}

\section{Empirical Analysis of Non-monotonicity for Linear Predictors }\label{sec:linear}

We next turn to linear predictors and demonstrate empirically that the social welfare can be non-monotonic in data representation quality in this setup as well.\footnote{The code can be found at \url{https://github.com/mjagadeesan/competition-nonmonotonicity}.} We take $X = \mathbb{R}^D$ and we let the model parameters be $\phi$. For binary classification, we let $\mathcal{F}^{\text{binary}}_{\text{linear}}$ be the family of linear predictors $f_{w,b} = \sigmoid(\langle w, x\rangle +b)$ where $w \in \mathbb{R}^D$, $b \in \mathbb{R}$, and $\phi = [w, b]$. Similarly, for classification with more than 2 classes, we let $\mathcal{F}^{\text{multi-class}}_{\text{linear}}$ be the family of linear predictors $f_{W,b} = \softmax(Wx + b)$ where $w \in \mathbb{R}^{ |Y| \times D}$, $b \in \mathbb{R}^{|Y|}$, and $\phi = [W, b]$.  Since this setting no longer admits closed-form formulae, we numerically estimate the equilibria using a variant of \textit{best-response dynamics}, where model-providers repeatedly best-respond to the other predictors. 

We first show on low-dimensional synthetic data on a binary classification task that the insights from Section \ref{subsec:examples} readily generalize to linear predictors (see Figures \ref{fig:simulations-bayes}-\ref{fig:simulations-dimension}). We then turn to natural data, considering binary and 10-class image classification tasks for CIFAR-10 and using pretrained networks---AlexNet, VGG16, and various ResNets---to generate high-dimensional representations (ranging from 512 to 4096). In this setting we again find that the equilibrium social loss can be non-monotonic in the Bayes risk (see Figure \ref{fig:other} and Figure \ref{fig:10class}).

\subsection{Best-response dynamics implementation}
\label{subsec:bestresponsedynamics}

To enable efficient computation, we assume the distribution $\mathcal{D}$ corresponds to a finite dataset with $N$ data points. We calculate equilibria using an approximation of best-response dynamics. Model-providers (players) iteratively (and approximately) best-respond to the other players' actions. We implement the approximate best-response as running several steps of gradient descent.  

In more detail, for each $j \in [m]$, we initialize the model parameters $\phi$ as mean zero Gaussians with standard deviation $\sigma$. Our algorithm then proceeds in stages. At a given stage, we iterate through the model-providers in the order $1, \ldots, m$. When $j$ is chosen, first we decide whether to reinitialize: if the risk $\mathbb{E}_{(x,y) \sim \mathcal{D}}[\ell(f_{\phi}(x), y)]$ exceeds a threshold $\rho$, we re-initialize $w_j$ and $b_j$ (sampling from mean zero Gaussians as before); otherwise, we do not reinitialize. Then we run gradient descent on $u(\cdot; \mathbf{f}_{-j})$ (computing the gradient on the full dataset of $N$ points) with learning rate $\eta$ for $I$ iterations, updating the parameters $\phi$. We run this gradient descent step up to 2 more times if the risk $\mathbb{E}_{(x,y) \sim \mathcal{D}}[\ell(f_{\phi}(x), y)]$ exceeds a threshold $\rho'$. At the end of a stage, the stopping condition is that for every $j \in [m]$, model-provider $j$'s utility $u(f_j, \mathbf{f}_{-j})$ has changed by at most $\epsilon$ relative to the previous stage. If the stopping condition is not met, we proceed to the next stage. 

\subsection{Simulations on synthetic data}\label{subsec:synthetic}

We first revisit Settings 1-3 from Section \ref{subsec:examples}, considering the same three axes of varying representations with the same distributions over $(x,y)$. In contrast to Section~\ref{subsec:examples}, we restrict the model family to linear predictors $\mathcal{F}^{\text{binary}}_{\text{linear}}$ instead of allowing all predictors $\mathcal{F}^{\text{binary}}_{\text{all}}$. We also set the noise parameter $c$ in user decisions \eqref{eq:userchoice} to $0.3$. Our goal is to examine if the findings from Section \ref{sec:theory} generalize to this new setting. 

We compute the equilibria for each of the following (continuous) distributions as follows. First, we let $\mathcal{D}$ be the empirical distribution over $N = 10,000$ samples from the continuous distribution. Then we run the best-response dynamics described in Section \ref{subsec:bestresponsedynamics} with $\rho = 0.3$, $I = 5000$, $\eta = 0.1$, $\epsilon = 0.01$, and $\sigma = 0.1$. 
We then compute the equilibrium social loss according to \eqref{eq:sociallossdef}.  We also compute the Bayes optimal predictor with gradient descent. See Appendix \ref{appendix:setup} for full details. 
 
Our results, described below, are depicted in Figures \ref{fig:simulations-bayes}-\ref{fig:simulations-dimension} (row 2). We compare these results with Figures \ref{fig:theory-bayes}-\ref{fig:theory-dimension} (row 1), which shows the analogous results for $\mathcal{F}^{\text{binary}}_{\text{all}}$ from Section \ref{subsec:examples}. 

\paragraph{Setting 1: Varying the per-representation Bayes risks.} Consider the same single $x$ setup as in Setting 1 in Section \ref{subsec:examples}. The only parameter of the predictor is the bias $b \in \mathbb{R}$ (i.e., we treat $x$ as zero-dimensional). 
Figure \ref{fig:simulations-bayes} shows that the equilibrium social loss is non-monotonic in $\alpha$, which mirrors the non-monotonicity in Figure \ref{fig:theory-bayes}.

\paragraph{Setting 2: Varying the representation noise.} 
Consider the same one-dimensional mixture-of-Gaussians distribution as in Setting 2 in Section \ref{subsec:examples}. (The weight $w$ is one-dimensional.) We again vary the noise $\sigma$ to change the representation quality. 
Figure \ref{fig:simulations-noise} shows that the equilibrium social loss is non-monotonic in the noise $\sigma$, which again mirrors the non-monotonicity in  Figure \ref{fig:theory-noise}. 

\paragraph{Setting 3: Varying the representation dimension.} Consider the same four-dimensional population as in Setting 3 in Section \ref{subsec:examples}. We vary the representation dimension $D$ from $0$ to $4$ to change the representation quality. Figure \ref{fig:simulations-dimension} shows that the equilibrium social loss is non-monotonic in the dimension $D$, which once again mirrors the non-monotonicity in  Figure \ref{fig:theory-dimension}. 

\paragraph{Discussion.}
In summary, in Figure \ref{fig:main}, rows 1 and 2  exhibit similar non-monotonicities. This illustrates that the insights from Section \ref{subsec:idealized} translate to linear predictors and noisy user decisions. 

\subsection{Simulations on CIFAR-10 for binary classification}\label{subsec:image}

We next turn to experiments with natural data. While we have directly varied the informativeness of data representations thus far, representations in practice are frequently generated by pretrained models. The choice of the pretrained model implicitly influences representation quality, as measured by Bayes risk on the downstream task. In this section, we consider how the equilibrium social loss changes with representations generated from pretrained models of varying quality. We restrict the model family to linear predictors $\mathcal{F}^{\text{binary}}_{\text{linear}}$ and set the noise parameter $c$ in user decisions \eqref{eq:userchoice} to $0.1$.

We consider a binary image classification task on CIFAR-10 \citep{CIFAR} with 50,000 images. Class $0$ is defined to be $\left\{\text{airplane}, \text{bird}, \text{automobile}, \text{ship}, \text{horse}, \text{truck}\right\}$ and the class $1$ is defined to be $\left\{\text{cat}, \text{deer}, \text{dog}, \text{frog}\right\}$. We treat the set of 50,000 images and labels as the population of users, meaning that it is both the training set and the validation set.\footnote{We make this choice to be consistent with the rest of the paper, where we focus on population-level behavior and thus do not consider  generalization error.} Representations are generated from five models---AlexNet \citep{AlexNet}, VGG16 \citep{VGGref}, ResNet18, ResNet34, and ResNet50 \citep{ResNet}---pretrained on ImageNet \citep{ImageNet}. The representation dimension is 4096 for AlexNet and VGG16, 512 for ResNet18 and ResNet34, and 2048 for ResNet50. 

We compute the equilibria as follows. First, we let $\mathcal{D}$ be the distribution described above with $N = 50,000$ data points. Then we run the best-response dynamics described in Section \ref{subsec:bestresponsedynamics} for $m \in \left\{3, 4, 5, 6, 8\right\}$ model-providers with $\rho = \rho' = 0.3$, $I = 2000$, $\epsilon = 0.001$, $\sigma = 0.5$, and a learning rate schedule that starts at $\eta = 1.0$.  We then compute the equilibrium social loss according to \eqref{eq:sociallossdef}.  We also compute the Bayes risk using gradient descent. For full experimental details, see Appendix \ref{appendix:setup}.

Figure \ref{fig:other} shows that the equilibrium social loss can be non-monotone in the Bayes risk. For example, for $m = 3$, VGG16 outperforms AlexNet, even though the Bayes risk of VGG16 is substantially higher than the Bayes risk of AlexNet. Interestingly, the location of the non-monotonicity differs across different values of $m$. For example, for $m = 5$ and $m = 8$, AlexNet outperforms ResNet50 despite having a higher Bayes risk, but ResNet50 outperforms AlexNet for $m = 3$ and $m = 4$.

\subsection{Simulations on CIFAR-10 for 10-class classification}\label{subsec:image10}

While our empirical analysis has thus far focused on binary classification, we now turn to classification with more than 2 classes. In particular, we consider a ten class CIFAR-10 \citep{CIFAR} task with 50,000 images. The labels are specified by the CIFAR-10 classes in the original dataset. We treat the set of 50,000 images and labels as the population of users, meaning that it is both the training set and the validation set. Representations are generated from the same five models---AlexNet \citep{AlexNet}, VGG16 \citep{VGGref}, ResNet18, ResNet34, and ResNet50 \citep{ResNet}---pretrained on ImageNet \citep{ImageNet}. We restrict the model family to linear predictors $\mathcal{F}^{\text{multi-class}}_{\text{linear}}$ and again set the noise parameter $c$ in user decisions \eqref{eq:userchoice} to $0.1$.

We compute the equilibria as follows. First, we let $\mathcal{D}$ be the distribution described above with $N = 50,000$ data points. Then we run the best-response dynamics described in Section \ref{subsec:bestresponsedynamics} for $m \in \left\{3, 4, 5, 6, 8\right\}$ model-providers with $\rho = 0.7$, $\rho' = 1.0$, $I = 2000$, $\epsilon = 0.001$, $\sigma = 0.5$, and a learning rate schedule that starts at $\eta = 1.0$. As before, we compute the equilibrium social loss according to \eqref{eq:sociallossdef}, and we also compute the Bayes risk using gradient descent. For full experimental details, see Appendix \ref{appendix:setup}.

Figure \ref{fig:10class} shows that the equilibrium social loss can be non-monotone in the Bayes risk. For example, across all five values of $m$, ResNet18 outperforms VGG16, even though the Bayes risk of ResNet is substantially higher than the Bayes risk of VGG16. Furthermore, for $m = 3$, VGG16 outperforms AlexNet despite having a larger Bayes risk. Interestingly, the shape of the equilibrium social loss curve for each value of $m$ (Figure \ref{fig:CIFAR-1010class}) appears qualitatively different than the analogous equilibrium social loss curve for binary classification (Figure \ref{fig:CIFAR-10}).

\section{Discussion of Model Assumptions}\label{sec:modeldiscussion}  

We highlight and discuss several assumptions that we make in our stylized model. 

\subsection{Assumptions on user decisions}

Our primary model for user decisions given by \eqref{eq:userchoice} is the standard logit model for discrete choice decisions \citep{train2002discrete} which is also known as the Boltzmann rationality model. In the limit as $c \rightarrow 0$, a user with representation $x$ and label $y$ select from the set of model-providers $\argmin_{j \in [m]} \ell(f_j(x), y)$ that achieve the minimum loss; in particular, the user chooses a model-provider from this set with probability proportional to the model-provider's market reputation. For $c > 0$, the specification in equation \eqref{eq:userchoice} captures that users evaluate a model-provider based on a noisy perception of the loss. 

While this model implicitly assumes that a user’s choice of platform is fully specified by the platforms’ choices of predictor (i.e. platforms are ex-ante homogeneous), we extend this model in Section \ref{subsec:unequal} to account for uneven market reputations across decisions. These market reputations are modeled as global weights in the logit model for discrete choice. 
Given market reputations $w_1, \ldots, w_m$, users choose a predictor according to: 
\begin{equation}
\label{eq:userchoicegeneralized}
\Prob[j^*(x,y) = j] = \frac{w_j \cdot e^{- \ell(f_j(x), y) / c}}{\sum_{j'=1}^m w_{j'} \cdot e^{- \ell(f_{j'}(x), y) / c}}.
\end{equation}
When the market reputations are all equal ($w_1 = \ldots = w_m$), equation \eqref{eq:userchoicegeneralized} exactly corresponds to \eqref{eq:userchoice}.  When the market reputations $w_j$ are not equal, equation \eqref{eq:userchoicegeneralized} captures that users place a higher weight on model-providers with a higher market reputation. This captures that users are more likely to choose a popular model-provider than a very small model-provider without much reputation. However, this formalization does assume that market reputations are global across users and that market reputations surface as tie-breaking weight in the noiseless limit.

Implicit in this model is asymmetric information between the model-providers and users. While the only information that a model-provider has about users is their representations, a user can make decisions based on noisy perceptions of their own loss (which can depend on their label). This captures that, even if users are unlikely to know their own labels,  users can experiment with multiple model-providers to (noisily) determine which one maximizes their utility. The inclusion of market reputations reflects that users are more likely to experiment with and ultimately choose popular model-providers than less popular model-providers. 

\subsection{Assumption of global data representations}

Our results assume that all model-providers share the same representations $x$ for each user and thus improvements in representations $x$ are experienced by all model-providers. This assumption is motivated by emerging marketplaces where different model-providers utilize the same pretrained model, but \textit{finetune} the model in different ways. To simplify this complex training process, we conceptualize pretraining as \textit{learning data representations (e.g., features)} and fine-tuning as \textit{learning a predictor from these representations}. In this formalization, increasing the scale of the pretrained model (e.g., by increasing the number of parameters or the amount of data) leads to improvements in data representations accessible to all of the model-providers during ``fine-tuning''.

An interesting direction for future work would be to incorporate  heterogeneity or local improvements in the data representations.

\subsection{Assumption on model-provider action space}

We make the simplifying assumption that the only action taken by model-providers is to choose a classifier from a pre-specified class. This formalization does not capture other actions (such as data collection and price setting) that may be taken by the platform. Incorporating other model-provider decisions would be an interesting avenue for future work.

\section{Discussion} 

We showed that the monotonicity of scaling trends can be violated under competition. In particular, we demonstrated that when multiple model-providers compete for users, improving data representation quality (as measured by Bayes risk) can \textit{increase} the overall loss at equilibrium. We exhibited the non-monotonicity of the equilibrium social loss in the Bayes risk when representations are varied along several axes (per-representation Bayes risk, noise, dimension, and pre-trained model used to generate the representations). 

An interesting direction for future work is to further characterize the regimes when the equilibrium social loss is monotonic versus non-monotonic in data representation quality as measured by Bayes risk. For example, an interesting open question is to generalize our theoretical results from Section \ref{sec:theory} to more general function classes and distributions of market reputations. Moreover, another interesting direction would be generalize our empirical findings from Section \ref{sec:linear} to other axes of varying data representations and to non-linear classes of predictors. Finally, while we have focused on classification tasks, it would be interesting to generalize our findings to regression tasks with continuous outputs or to generative AI tasks with text-based or image-based outputs. 

More broadly, the non-monotonicity of equilibrium social welfare in scale under competition establishes a disconnect between scaling trends in the single model-provider setting and in the competitive setting. In particular, typical scaling trends (e.g. \citep{K20, SK20, B21, H22, hernandez2021scaling})---which show increasing scale reliably increases predictive accuracy for a single model-provider in isolation---may not translate to competitive settings such as digital marketplaces. Thus, understanding the downstream impact of scale on user welfare in digital marketplaces will likely require understanding how scaling trends behave under competition. We hope that our work serves as a starting point for analyzing and eventually characterizing the scaling trends of learning systems in competitive settings.

\section{Acknowledgments}
We thank Yiding Feng, Xinyan Hu, and Alex Wei for useful comments on the paper. This work was partially supported by the Open Phil AI Fellowship, the Berkeley Fellowship, the European Research Council (ERC) Synergy Grant program, and the National Science Foundation under grant numbers 2031899 and 1804794. 

\printbibliography

@article{hernandez2021scaling,
  title={Scaling laws for transfer},
  author={Hernandez, Danny and Kaplan, Jared and Henighan, Tom and McCandlish, Sam},
  journal={arXiv preprint arXiv:2102.01293},
  year={2021}
}

@article{DCRMF22,
  author    = {Sarah Dean and
               Mihaela Curmei and
               Lillian J. Ratliff and
               Jamie Morgenstern and
               Maryam Fazel},
  title     = {Multi-learner risk reduction under endogenous participation dynamics},
  journal   = {CoRR},
  volume    = {abs/2206.02667},
  year      = {2022},
}

@INPROCEEDINGS{C04,
    author = {Shih{-}fen Cheng and Daniel M. Reeves and Yevgeniy Vorobeychik and Michael P. Wellman},
    title = {Notes on equilibria in symmetric games},
    booktitle = {Proceedings of the 6th International Workshop on Game Theoretic and Decision Theoretic Agents (GTDT)},
    year = {2004},
    pages = {71--78},
    publisher = {}
}

@article{AMSW20,
  author    = {Guy Aridor and
               Yishay Mansour and
               Aleksandrs Slivkins and
               Zhiwei Steven Wu},
  title     = {Competing Bandits: The Perils of Exploration Under Competition},
  journal   = {CoRR},
  volume    = {abs/2007.10144},
  year      = {2020}
}

@inproceedings{LDRSH18,
  author       = {Lydia T. Liu and
                  Sarah Dean and
                  Esther Rolf and
                  Max Simchowitz and
                  Moritz Hardt},
  title        = {Delayed Impact of Fair Machine Learning},
  booktitle    = {Proceedings of the 35th International Conference on Machine Learning,
                  {ICML} 2018, Stockholmsm{\"{a}}ssan, Stockholm, Sweden, July
                  10-15, 2018},
  series       = {Proceedings of Machine Learning Research},
  volume       = {80},
  pages        = {3156--3164},
  publisher    = {{PMLR}},
  year         = {2018}
}

@article{H29,
author = {Hotelling, Harold},
year = {1981},
pages = {41--57},
title = {Stability in Competition},
volume = {39},
number={153},
journal = {Economic Journal},
}

@inproceedings{IKLMPT11,
  author    = {Nicole Immorlica and
               Adam Tauman Kalai and
               Brendan Lucier and
               Ankur Moitra and
               Andrew Postlewaite and
               Moshe Tennenholtz},
  title     = {Dueling algorithms},
  booktitle = {Proceedings of the 43rd {ACM} Symposium on Theory of Computing (STOC)},
  pages     = {215--224},
  year      = {2011}
}

@article{J21,
title = {The Economics of Platforms: A Theory Guide for Competition Policy},
journal = {Information Economics and Policy},
volume = {54},
pages = {100880},
year = {2021},
author = {Bruno Jullien and Wilfried Sand-Zantman}
}

@article{C21,
title = {Market power, competition and innovation in digital markets: A survey},
journal = {Information Economics and Policy},
volume = {54},
pages = {100853},
year = {2021},
author = {Emilio Calvano and Michele Polo}
}

@inproceedings{LWHKBC20,
  author       = {Lydia T. Liu and
                  Ashia Wilson and
                  Nika Haghtalab and
                  Adam Tauman Kalai and
                  Christian Borgs and
                  Jennifer T. Chayes},
  editor       = {Mireille Hildebrandt and
                  Carlos Castillo and
                  L. Elisa Celis and
                  Salvatore Ruggieri and
                  Linnet Taylor and
                  Gabriela Zanfir{-}Fortuna},
  title        = {The disparate equilibria of algorithmic decision making when individuals
                  invest rationally},
  booktitle    = {FAT* '20: Conference on Fairness, Accountability, and Transparency,
                  Barcelona, Spain, January 27-30, 2020},
  pages        = {381--391},
  publisher    = {{ACM}},
  year         = {2020}
}

@article{bruckner12pred,
	author = {Br\"{u}ckner, Michael and Kanzow, Christian and Scheffer, Tobias},
	issue_date = {January 2012},
	journal = {JMLR},
	number = {1},
	numpages = {38},
	pages = {2617--2654},
	title = {Static Prediction Games for Adversarial Learning Problems},
	volume = {13},
	year = {2012}}

@inproceedings{hardt16strat,
	author = {Hardt, Moritz and Megiddo, Nimrod and Papadimitriou, Christos and Wootters, Mary},
	numpages = {12},
	pages = {111--122},
		booktitle = {Proceedings of the 7th Conference on Innovations in Theoretical Computer Science (ITCS)},
	title = {Strategic Classification},
	year = {2016}}

@InProceedings{HSNL18,
  title = 	 {Fairness Without Demographics in Repeated Loss Minimization},
  author =       {Hashimoto, Tatsunori and Srivastava, Megha and Namkoong, Hongseok and Liang, Percy},
  booktitle = 	 {Proceedings of the 35th International Conference on Machine Learning},
  pages = 	 {1929--1938},
  year = 	 {2018},
  editor = 	 {Dy, Jennifer and Krause, Andreas},
  volume = 	 {80},
  series = 	 {Proceedings of Machine Learning Research},
  month = 	 {10--15 Jul},
  publisher =    {PMLR}
}

@inproceedings{BT17,
  author    = {Omer Ben{-}Porat and
               Moshe Tennenholtz},
  title     = {Best Response Regression},
  booktitle = {Advances in Neural Information Processing Systems 30: Annual Conference
               on Neural Information Processing Systems (NIPS)},
  pages     = {1499--1508},
  year      = {2017}
}

@inproceedings{CHRH22, 
author = {Curmei, Mihaela and Haupt, Andreas A. and Recht, Benjamin and Hadfield-Menell, Dylan}, 
title = {Towards Psychologically-Grounded Dynamic Preference Models}, 
year = {2022}, 
booktitle = {Proceedings of the 16th ACM Conference on Recommender Systems}, 
pages = {35–48}, 
numpages = {14}, 
series = {RecSys '22} 
}

@article{JJH22,
  author    = {Meena Jagadeesan and
               Michael I. Jordan and
               Nika Haghtalab},
  title     = {Competition, Alignment, and Equilibria in Digital Marketplaces},
  journal   = {CoRR},
  volume    = {abs/2208.14423},
  year      = {2022}
}

@inproceedings{CDHR22,
  author    = {Micah D. Carroll and
               Anca D. Dragan and
               Stuart Russell and
               Dylan Hadfield{-}Menell},
  editor    = {Kamalika Chaudhuri and
               Stefanie Jegelka and
               Le Song and
               Csaba Szepesv{\'{a}}ri and
               Gang Niu and
               Sivan Sabato},
  title     = {Estimating and Penalizing Induced Preference Shifts in Recommender
               Systems},
  booktitle = {International Conference on Machine Learning, {ICML} 2022, 17-23 July
               2022, Baltimore, Maryland, {USA}},
  series    = {Proceedings of Machine Learning Research},
  volume    = {162},
  pages     = {2686--2708},
  publisher = {{PMLR}},
  year      = {2022}
}

@inproceedings{DM22,
  author    = {Sarah Dean and
               Jamie Morgenstern},
  editor    = {David M. Pennock and
               Ilya Segal and
               Sven Seuken},
  title     = {Preference Dynamics Under Personalized Recommendations},
  booktitle = {{EC} '22: The 23rd {ACM} Conference on Economics and Computation,
               Boulder, CO, USA, July 11 - 15, 2022},
  pages     = {795--816},
  publisher = {{ACM}},
  year      = {2022}
}

@article{KGZ22,
  author       = {Yongchan Kwon and
                  Tony Ginart and
                  James Zou},
  title        = {Competition over data: how does data purchase affect users?},
  journal      = {Trans. Mach. Learn. Res.},
  year         = {2022}
}

@article{MS96,
title = {Potential Games},
journal = {Games and Economic Behavior},
volume = {14},
number = {1},
pages = {124-143},
year = {1996},
author = {Dov Monderer and Lloyd S. Shapley}
}

@article{R73, author = {Rosenthal, Robert W.}, title = {A Class of Games Possessing Pure-Strategy Nash Equilibria}, year = {1973}, issue_date = {Dec 1973}, publisher = {Physica-Verlag GmbH}, address = {DEU}, volume = {2}, number = {1}, abstract = {A class of noncooperative games (of interest in certain applications) is described. Each game in the class is shown to possess at least one Nash equilibrium in pure strategies.}, journal = {Int. J. Game Theory}, month = {dec}, pages = {65–67}, numpages = {3}, keywords = {Economic Theory, Pure Strategy, Game Theory, Nash Equilibrium, Noncooperative Game} }

@inproceedings{GZKZ21,
  author    = {Tony Ginart and
               Eva Zhang and
               Yongchan Kwon and
               James Zou},
  editor    = {Arindam Banerjee and
               Kenji Fukumizu},
  title     = {Competing {AI:} How does competition feedback affect machine learning?},
  booktitle = {The 24th International Conference on Artificial Intelligence and Statistics (AISTATS)},
  series    = {Proceedings of Machine Learning Research},
  volume    = {130},
  pages     = {1693--1701},
  year      = {2021}
}

@book{train2002discrete,
  title={Discrete Choice Methods with Simulation},
  author={Train, Kenneth},
  year={2002},
  publisher={Cambridge University Press},
  address={New York}
}

@inproceedings{BT19,
  author    = {Omer Ben{-}Porat and
               Moshe Tennenholtz},
  title     = {Regression Equilibrium},
  booktitle = {Proceedings of the 2019 {ACM} Conference on Economics and Computation (EC)},
  pages     = {173--191},
  publisher = {{ACM}},
  year      = {2019}
}

@article{BCV13,
  author       = {Yoshua Bengio and
                  Aaron C. Courville and
                  Pascal Vincent},
  title        = {Representation Learning: {A} Review and New Perspectives},
  journal      = {{IEEE} Trans. Pattern Anal. Mach. Intell.},
  volume       = {35},
  number       = {8},
  pages        = {1798--1828},
  year         = {2013}
}

@article{SK20,
  author       = {Utkarsh Sharma and
                  Jared Kaplan},
  title        = {A Neural Scaling Law from the Dimension of the Data Manifold},
  journal      = {CoRR},
  volume       = {abs/2004.10802},
  year         = {2020}
}

@article{B21,
  author       = {Yasaman Bahri and
                  Ethan Dyer and
                  Jared Kaplan and
                  Jaehoon Lee and
                  Utkarsh Sharma},
  title        = {Explaining Neural Scaling Laws},
  journal      = {CoRR},
  volume       = {abs/2102.06701},
  year         = {2021}
}

@article{K20,
  author       = {Jared Kaplan and
                  Sam McCandlish and
                  Tom Henighan and
                  Tom B. Brown and
                  Benjamin Chess and
                  Rewon Child and
                  Scott Gray and
                  Alec Radford and
                  Jeffrey Wu and
                  Dario Amodei},
  title        = {Scaling Laws for Neural Language Models},
  journal      = {CoRR},
  volume       = {abs/2001.08361},
  year         = {2020}
}

@inproceedings{BCKJL22,
  author       = {Rishi Bommasani and
                  Kathleen A. Creel and
                  Ananya Kumar and
                  Dan Jurafsky and
                  Percy Liang},
  title        = {Picking on the Same Person: Does Algorithmic Monoculture lead to Outcome
                  Homogenization?},
  booktitle    = {NeurIPS},
  year         = {2022}
}

@techreport{CIFAR,
    author = {Alex Krizhevsky},
    title = {Learning multiple layers of features from tiny images},
    institution = {University of Toronto},
    year = {2009}
}

@book{ProductDifferentiation,
    author = {Anderson, Simon P. and de Palma, Andre and Thisse, Jacques-Francois},
    title = "{Discrete Choice Theory of Product Differentiation}",
    publisher = {The MIT Press},
    year = {1992},
    month = {10}
}

@article{DGT79,
 author = {C. d'Aspremont and J. Jaskold Gabszewicz and J.-F. Thisse},
 ISSN = {00129682, 14680262},
 journal = {Econometrica},
 number = {5},
 pages = {1145--1150},
 publisher = {[Wiley, Econometric Society]},
 title = {On Hotelling's "Stability in Competition"},
 volume = {47},
 year = {1979}
}

@article{FGHJN19,
  author       = {Yiding Feng and
                  Ronen Gradwohl and
                  Jason D. Hartline and
                  Aleck C. Johnsen and
                  Denis Nekipelov},
  title        = {Bias-Variance Games},
  journal      = {CoRR},
  volume       = {abs/1909.03618},
  year         = {2019}
}

@inproceedings{ImageNet,
  author       = {Jia Deng and
                  Wei Dong and
                  Richard Socher and
                  Li{-}Jia Li and
                  Kai Li and
                  Li Fei{-}Fei},
  title        = {ImageNet: {A} large-scale hierarchical image database},
  booktitle    = {2009 {IEEE} Computer Society Conference on Computer Vision and Pattern
                  Recognition {(CVPR} 2009), 20-25 June 2009, Miami, Florida, {USA}},
  pages        = {248--255},
  publisher    = {{IEEE} Computer Society},
  year         = {2009}
}

@inproceedings{VGGref,
  author       = {Karen Simonyan and
                  Andrew Zisserman},
  editor       = {Yoshua Bengio and
                  Yann LeCun},
  title        = {Very Deep Convolutional Networks for Large-Scale Image Recognition},
  booktitle    = {3rd International Conference on Learning Representations, {ICLR} 2015,
                  San Diego, CA, USA, May 7-9, 2015, Conference Track Proceedings},
  year         = {2015}
}

@inproceedings{ResNet,
  author       = {Kaiming He and
                  Xiangyu Zhang and
                  Shaoqing Ren and
                  Jian Sun},
  title        = {Deep Residual Learning for Image Recognition},
  booktitle    = {2016 {IEEE} Conference on Computer Vision and Pattern Recognition,
                  {CVPR} 2016, Las Vegas, NV, USA, June 27-30, 2016},
  pages        = {770--778},
  publisher    = {{IEEE} Computer Society},
  year         = {2016}
}

@inproceedings{AlexNet,
  author       = {Alex Krizhevsky and
                  Ilya Sutskever and
                  Geoffrey E. Hinton},
  editor       = {Peter L. Bartlett and
                  Fernando C. N. Pereira and
                  Christopher J. C. Burges and
                  L{\'{e}}on Bottou and
                  Kilian Q. Weinberger},
  title        = {ImageNet Classification with Deep Convolutional Neural Networks},
  booktitle    = {Advances in Neural Information Processing Systems 25: 26th Annual
                  Conference on Neural Information Processing Systems 2012. Proceedings
                  of a meeting held December 3-6, 2012, Lake Tahoe, Nevada, United States},
  pages        = {1106--1114},
  year         = {2012}
}

@article{KR21,
  author       = {Jon M. Kleinberg and
                  Manish Raghavan},
  title        = {Algorithmic monoculture and social welfare},
  journal      = {Proc. Natl. Acad. Sci. {USA}},
  volume       = {118},
  number       = {22},
  pages        = {e2018340118},
  year         = {2021}
}

@unpublished{IK22,
author = {Ganesh Iyer and T. Tony Ke},
title = {Competitive Algorithmic Targeting and Model Selection},
note={Available at SSRN: https://ssrn.com/abstract=4214973},
year={2022}
}

@article{H22,
  author       = {Jordan Hoffmann and
                  Sebastian Borgeaud and
                  Arthur Mensch and
                  Elena Buchatskaya and
                  Trevor Cai and
                  Eliza Rutherford and
                  Diego de Las Casas and
                  Lisa Anne Hendricks and
                  Johannes Welbl and
                  Aidan Clark and
                  Tom Hennigan and
                  Eric Noland and
                  Katie Millican and
                  George van den Driessche and
                  Bogdan Damoc and
                  Aurelia Guy and
                  Simon Osindero and
                  Karen Simonyan and
                  Erich Elsen and
                  Jack W. Rae and
                  Oriol Vinyals and
                  Laurent Sifre},
  title        = {Training Compute-Optimal Large Language Models},
  journal      = {CoRR},
  volume       = {abs/2203.15556},
  year         = {2022}
}

\newpage 

\appendix

\section{Additional Details of Simulations }\label{appendix:setup}

\paragraph{Hyperparameters. } We introduce a temperature parameter $\tau$ within our loss function, defining the loss $\ell(f_{w,b}(x), y)$ to be $|\sigmoid((\langle w, x \rangle + b)/\tau) - 1|$. This reparameterizes, but does not change, the model family. 

When we run the best-response dynamics, we always initialize the model parameters as mean-zero Gaussians with standard deviation $\sigma$. When we reinitialize model parameters, we again initialize them as mean-zero Gaussians with standard deviation $\sigma$. For Section \ref{subsec:synthetic}, we set  $I = 5000$,  $\tau = 0.1$, $\epsilon = 0.001$, $\eta = 0.1$, $\sigma = 0.1$, and $\rho = \rho' = 1.0$. For Section \ref{subsec:image} and Section \ref{subsec:image10}, we set $I = 2000$, $\sigma = 0.5$, $\tau = 1.0$, $\epsilon = 0.001$, and $\eta$ with the following learning rate schedule to expedite convergence: $\eta = 1.0$ if the risk $\mathbb{E}_{(x,y) \sim \mathcal{D}}[\ell(f_{w_j, b_j}(x), y)]$ is at least $0.5$, $\eta = 5.0$ if the risk is in $[0.4, 0.5)$, $\eta = 15$ if the risk is in $[0.3, 0.4)$, and $\eta = 20$ if the risk is less than $0.3$. We set $\rho = \rho' = 0.3$ for Section \ref{subsec:image} and we set $\rho = 0.7$ and $\rho' = 1$ for Section \ref{subsec:image10}. 

For Section \ref{subsec:image} and Section \ref{subsec:image10}, we ran over several trials for each data point and the error bars show two standard errors from the mean. For binary classification, the number of trials was 20 for $m = 3$ and $m = 4$ and 8 for $m =5$, $m = 6$, and $m = 8$. For 10-class classification, the number of trials was 40 for $m = 3$ and $m = 4$ and 8 for $m =5$, $m = 6$, and $m = 8$. 

In addition to computing the equilibria, we also approximate the optimal Bayes risk. For Section \ref{subsec:synthetic}, we run gradient descent for $10,000$ iterations with learning rate equal to one and parameters initialized to independent Gaussians with zero mean and standard deviation $0.1$. For Section \ref{subsec:image}, we run gradient descent for $50,000$ iterations with learning rate equal to $0.1$ and parameters initialized to independent Gaussians with zero mean and standard deviation $0.005$. For Section \ref{subsec:image10}, we run gradient descent for $70,000$ iterations with learning rate equal to $0.1$ and parameters initialized to independent Gaussians with zero mean and standard deviation $0.005$.

\paragraph{Generation of the synthetic dataset.}
In Setting 1 (Figures \ref{fig:theory-bayes}, \ref{fig:theory-bayes-varyw}, and \ref{fig:simulations-bayes}), we consider a zero-dimensional population where $Y \mid X$ is distributed as a Bernoulli with probability $\alpha$. In Figure \ref{fig:simulations-bayes}, the meaning of a zero-dimensional representation is that the only parameter is the bias. 

In Setting 2 (Figures \ref{fig:theory-noise}, \ref{fig:theory-noise-varyw}, and \ref{fig:simulations-noise}), we consider a one-dimensional population given by a mixture of Gaussians. In particular, the Gaussian $X \mid Y = 0$ is distributed as $N(-\mu, \sigma^2)$ and the Gaussian $X \mid Y = 1$ is distributed as $N(\mu, \sigma^2)$. The mean $\mu$ is taken to be $1$. The distribution of the labels is given by $\Prob[Y = 1] = 0.4$ and $\Prob[Y = 1] = 0.6$. 

In Setting 3 (Figures \ref{fig:theory-dimension}, \ref{fig:theory-dimension-varyw}, and \ref{fig:simulations-dimension}),  let $D_{\text{base}} = 4$. The distribution over $(X^{\text{all}}, Y)$ consists of $D_{\text{base}}$ subpopulations. We define the distribution of $(X^{\text{all}}, Y)$ as follows: each subpopulation $1 \le i \le D_{\text{base}}$ has a different mean vector $\mu_i \in \mathbb{R}^{D_{\text{base}}}$ and is distributed as $X^{\text{all}} \sim Y = 0 \sim N(-\mu_i, \sigma^2)$, let $X^{\text{all}} \sim Y = 1 \sim N(\mu_i, \sigma^2)$, and let $\Prob[Y = 0] = \Prob[Y = 1] = 1/2$.We define $(\mu_i)_d = 0$ for $1 \le d \le i-1$ and $(\mu_i)_d = 1$ for $i \le d \le D_{\text{base}}$, and we let $\sigma = 1$. If the representation dimension is $D$, then we define $X$ to consist of the first $D$ coordinates of $X^{\text{all}}$. When $D = 0$, the model-provider is not given representations and thus must assign all users to the same output. (Our setup captures that the dimension $D$ must be at least $i$ to see any nontrivial features about subpopulation $i$.) The distribution across the 4 subpopulations is $0.7$, $0.15$, $0.1$, and $0.05$. 

In each case, we draw 10,000 samples and take the resulting empirical distribution to be $\mathcal{D}$. 

\paragraph{Generation of the CIFAR-10 task.}
We consider a binary classification task consisting of the first 10,000 images in the training set of CIFAR-10. The class $0$ is defined to be  \{airplane, bird, automobile, ship, horse, truck\} and class $1$ is defined to be \{cat, deer, dog, frog\}. To generate representations, we use the pretrained models from the Pytorch \texttt{torchvision.models} package; these models were pretrained on ImageNet.

\paragraph{Compute details.} 
We run our simulations on a single A100 GPU.

\section{Additional Results and Proofs for Section \ref{sec:theory}}\label{appendix:theory}

In Appendix \ref{subsec:decomposition}, we show a decomposition lemma and prove existence of equilibrium (Proposition \ref{prop:existence}). We prove the results from Section \ref{subsec:idealized} in Appendix \ref{appendix:proofsidealized}, prove the results from Section \ref{subsec:idealizedmulticlass} in Appendix \ref{appendix:multiclass}, and prove the results from Section \ref{subsec:unequal} in Appendix \ref{appendix:proofsunequal}.

\subsection{Decomposition lemma and existence of equilibrium}\label{subsec:decomposition}

We first show that we can decompose model-provider actions into independent decisions about each representation $x$. To formalize this, let $\mathcal{D}$ be the data distribution, and let $\mathcal{D}_x$ be the conditional distribution over $(X,Y) \mid X = x$ where $(X,Y) \sim \mathcal{D}$. Let $(\mathcal{F}^{\text{multi-class}}_{\text{all}})^x := \left\{f^0, f^1, \ldots, f^{K-1}\right\}$ be the class of $K$ functions from a single representation $x$ to $\left\{0,1, \ldots, K-1 \right\}$, where $f^i(x) = i$. 
\begin{lemma}
\label{lemma:decomposition}
Let $X$ be a finite set of representations, let $\mathcal{F} = \mathcal{F}^{\text{multi-class}}_{\text{all}}$, and let $\mathcal{D}$ be the distribution over $(X,Y)$. For each $x \in X$, let $\mathcal{D}_x$ be the conditional distribution over $(X,Y) \mid X = x$ where $(X,Y) \sim \mathcal{D}$, and let $(\mathcal{F}^{\text{multi-class}}_{\text{all}})^x := \left\{f^0, f^1, \ldots, f^{K-1} \right\}$ be the class of the $K$ functions from a single representation $x$ to $\left\{0,1\right\}$, where $f^i(x) = i$. Suppose that user decisions are noiseless (i.e., $c \rightarrow 0$, so user decisions are given by \eqref{eq:userchoicespecific}). A market outcome $f_1, \ldots, f_m$ is a pure-strategy equilibrium if and only if for every $x \in X$, the market outcome $(f^{f_1(x)}, \ldots, f^{f_m(x)})$ is a pure-strategy equilibrium for $(\mathcal{F}^{\text{multi-class}}_{\text{all}})^x$ with data distribution $\mathcal{D}_x$.
\end{lemma}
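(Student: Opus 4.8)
The plan is to show that both the utility function and the action set decompose across representations $x$, so that the global best-response condition is equivalent to a family of independent local best-response conditions, one per $x$. The two structural facts that drive this are: (i) the user's choice $j^*(x,y)$ under \eqref{eq:userchoicespecific} depends only on the predictions $f_1(x), \dots, f_m(x)$ at the point $x$, not on the predictors' behavior elsewhere; and (ii) because $\mathcal{F} = \mathcal{F}_{\text{all}}$ consists of \emph{all} functions $X \to \{0,1\}$, each model-provider may choose its value at each $x$ entirely independently.

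First I would write the utility as a sum over representations. Writing $p(x) := \Prob_{(X,Y)\sim\mathcal{D}}[X=x]$ and using the law of total expectation,
\[
u(f_j; \mathbf{f}_{-j}) = \sum_{x \in X} p(x)\, \E_{(X,Y)\sim\mathcal{D}_x}\!\left[\Prob[j^*(x,Y)=j]\right].
\]
By fact (i), the $x$-th summand depends on $\mathbf{f} = (f_1,\dots,f_m)$ only through the vector $(f_1(x),\dots,f_m(x)) \in \{0,1\}^m$; call this contribution $g_x^{(j)}(f_1(x),\dots,f_m(x))$. The same expression, specialized to the single-point game on $\mathcal{F}_{\text{all}}^x$ with distribution $\mathcal{D}_x$, is exactly the utility of provider $j$ in that game, up to the common positive factor $p(x)$, which does not affect best responses or the relevant $\argmax$.

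Next I would exploit fact (ii) to decompose the best response. Since $f_j$ ranges over all of $\mathcal{F}_{\text{all}}$, choosing $f_j \in \argmax_{f} u(f; \mathbf{f}_{-j})$ amounts to choosing, for each $x$ \emph{separately}, the value $f_j(x) \in \{0,1\}$ that maximizes the corresponding summand $g_x^{(j)}$, because distinct summands involve disjoint coordinates of the action and the feasible set is the full product $\prod_{x} \{0,1\}$. Hence $f_j$ is a best response to $\mathbf{f}_{-j}$ in the global game if and only if, for every $x \in X$, the value $f^{f_j(x)}$ is a best response to $(f^{f_{j'}(x)})_{j' \ne j}$ in the local game on $\mathcal{F}_{\text{all}}^x$. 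Finally, $(f_1,\dots,f_m)$ is a Nash equilibrium iff this holds for all $j$; swapping the order of the resulting quantifiers (``for all $j$, for all $x$'' $\Leftrightarrow$ ``for all $x$, for all $j$'') yields precisely the claim that $(f^{f_1(x)},\dots,f^{f_m(x)})$ is a pure-strategy equilibrium of the local game for every $x$.

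I expect the main obstacle to be making the decomposition of the maximization rigorous, i.e. arguing that maximizing a sum of coordinate-separable functions over a product action set is equivalent to independently maximizing each term. This is where $\mathcal{F} = \mathcal{F}_{\text{all}}$ is essential: for a restricted class the values $\{f_j(x)\}_{x}$ would be coupled and the separation would break. A secondary point to handle carefully is that the tie-breaking in \eqref{eq:userchoicespecific} is itself local (it depends only on $\argmin_{j'}|y - f_{j'}(x)|$), so fact (i) holds regardless of ties; this is why the lemma requires no assumption on $\alpha(x)$, unlike Proposition \ref{prop:idealized}.
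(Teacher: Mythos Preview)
Your proposal is correct and follows essentially the same approach as the paper: both decompose the utility via the law of total expectation into a sum over $x$ of terms depending only on $(f_1(x),\dots,f_m(x))$, and then use that $\mathcal{F}_{\text{all}}$ is the full product $\{0,1\}^X$ to reduce the global best-response condition to independent local ones. The paper phrases each direction as a short contradiction argument (constructing a single-coordinate deviation), whereas you state the coordinate-separability principle directly; these are the same argument in different clothing.
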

The intuition is that since $\mathcal{F}^{\text{multi-class}}_{\text{all}}$ is all possible functions, model-providers make independent decisions for each data representation. 
\begin{proof}
Let $\mathcal{D}^R$ be the marginal distribution of $X$ with respect to the distribution $(X,Y) \sim \mathcal{D}$. First, we write model-provider $j$'s utility as:
\begin{equation}
\label{eq:utilitydecomposed}
  u(f_j; \mathbf{f}_{-j}) = \E_{(x,y) \sim \mathcal{D}} \left[\Prob[j^*(x,y) = j] \right] =  \E_{x' \sim \mathcal{D}^R} \left[\E_{(x,y) \sim \mathcal{D}_{x'}} \left[\Prob[j^*(x,y) = j] \right]\right],
\end{equation}
where $\textbf{f}_{-j}$ denotes the predictors chosen by the other model-providers. The key intuition for the proof will be that the predictions $[f_1(x''), \ldots, f_m(x'')]$ affect $\E_{(x,y) \sim \mathcal{D}_{x'}} \left[\Prob[j^*(x,y) = j] \right]$ if and only if $x' = x''$. 

First we show that if $f_1, \ldots, f_m$ is a pure-strategy equilibrium, then $(f^{f_1(x')}, \ldots, f^{f_m(x')})$ is a pure-strategy equilibrium for $(\mathcal{F}^{\text{multi-class}}_{\text{all}})^{x'}$ with data distribution $\mathcal{D}_{x'}$. Assume for sake of contradiction that $(f^{f_1(x')}, \ldots, f^{f_m(x')})$ is not an equilibrium. Then there exists $j' \in [m]$ such that model-provider $j'$ would achieve higher utility if they switched from $f^{f_{j'}(x')}$ to $f^{l}$ for some $l \neq f_{j'}(x')$. Let $f'_{j'}$ be the predictor given by $f'_{j'}(x) = f_{j'}(x)$ if $x \neq x'$ and $f'_{j'}(x') = l$. By equation \eqref{eq:utilitydecomposed}, this would mean that $u(f'_{j'}; \mathbf{f}_{-j'})$ is strictly higher than $u(f_{j'}; \mathbf{f}_{-j'})$ which is a contradiction.

Next, we show that if $(f^{f_1(x')}, \ldots, f^{f_m(x')})$ is a pure-strategy equilibrium for $(\mathcal{F}^{\text{binary}}_{\text{all}})^{x'}$ with data distribution $\mathcal{D}_{x'}$ for all $x' \in X$ then $f_1, \ldots, f_m$ is a pure-strategy equilibrium. Assume for sake of contradiction that there exists $j'$ such that $u(f'_{j'}; \mathbf{f}_{-j'}) > u(f_j; \mathbf{f}_{-j'})$. By equation \eqref{eq:utilitydecomposed}, there must exist $x'$ such that  $\E_{(x,y) \sim \mathcal{D}_{x'}} \left[\Prob[j^*(x,y) = j'] \right]$ is higher for $f'_{j'}$ than $f_{j'}$. This means that $(f^{f_1(x')}, \ldots, f^{f_m(x')})$ is not an equilibrium (since $f^{l}$ would be a better response for model-provider $j'$) which is a contradiction.  
\end{proof}

We next prove Proposition \ref{prop:existence}, showing that a pure-strategy equilibrium exists by applying the proof technique of Lemma 3.7 of \citet{BT19}. 
\begin{proof}[Proof of Proposition \ref{prop:existence}]
By Lemma \ref{lemma:decomposition}, it suffices to show that there exists a pure-strategy equilibrium whenever there is a single data representation $X = \left\{x\right\}$. In this case, the function class $\mathcal{F}^{\text{multi-class}}_{\text{all}}$ consists of $K$ predictors $\left\{f^0, f^1 \ldots, f^{K-1}\right\}$ given by $f^i(x) = i$. For each class $i$, let $\mathbb{P}[Y = i \mid X] = p_i$. 

For the special case of $K = 2$ (binary classification), the game between model-providers is thus a 2-action game with symmetric utility functions. Thus, it must possess a (possibly asymmetric) pure Nash equilibrium \citep{C04}. 

For the general case of $K \ge 2$, we can no longer apply the result in \citep{C04} since there can be more than 2 actions. We instead show that the game is a potential game, following a similar argument to \citet{BT19}. We define the potential function $\Phi(\cdot)$ as follows. For each $i \in \left\{0, 1, \ldots, K-1\right\}$, we define the function $G_i: \left\{f^0, f^1 \ldots, f^{K-1}\right\}^m \rightarrow \mathbb{R}_{\ge 0}$ to be:
\[
G_i(f_1, \ldots, f_m) := 
\begin{cases}
 \frac{1}{m}  &\text{ if } |\left\{j \in [m] \mid f_j = f^i\right\}| = 0 \\
  \sum_{l=1}^{|\left\{j \in [m] \mid f_j = f^i\right\}| } \frac{1}{l}  &\text{ if } |\left\{j \in [m] \mid f_j = f^i\right\}| \ge 1.
\end{cases}
\]
 We let 
\[\Phi(f_1, \ldots, f_m) := \sum_{i=1}^K p_i \cdot G_i(f_1, \ldots, f_m).\]

We show that $\Phi$ is a potential function for this game. Suppose that model-provider $j$ switches from $f_j := f^i$ to $f'_j = f^{i'}$ for $i' \neq i$. For each $i \in \left\{0, 1, \ldots, K-1\right\}$, let $N_i = |\left\{j \in [m] \mid f_j = f^i\right\}|$ be the number of model-providers who choose $f^i$ on the original outcome $[f_1, \ldots, f_m]$. We observe that:
\[ u(f_j; \mathbf{f}_{-j}) - u(f'_{j}; \mathbf{f}_{-j}) = 
\begin{cases}
p_i \cdot \frac{1}{N_i} - p_{i'} \cdot \frac{1}{N_{i'}+1}  \text{ if } N_i > 1, N_{i'} > 0 \\
p_i \cdot \left(1 - \frac{1}{m} \right) - p_{i'} \cdot \frac{1}{N_{i'}+1} \text{ if } N_i = 1, N_{i'} > 0 \\
p_i \cdot \frac{1}{N_i} - p_{i'} \cdot \left(1 - \frac{1}{m} \right) \text{ if } N_i > 1, N_{i'} = 0 \\
p_i \cdot \left(1 - \frac{1}{m} \right) - p_{i'} \cdot \left(1 - \frac{1}{m} \right)  \text{ if } N_i = 1, N_{i'} = 0 \\
\end{cases}
\]
Moreover,  we see that: 
\begin{align*}
  &\Phi(f_1, \ldots, f_m) - \Phi(f_1, f_2, \ldots, f_{j-1}, f'_j, f_{j+1}, \ldots, f_m) \\
  &=   \sum_{i''=1}^K p_{i''} \cdot G_{i''}(f_1, \ldots, f_m) - \sum_{i''=1}^K p_{i''} \cdot  G_{i''}(f_1, f_2, \ldots, f_{j-1}, f'_j, f_{j+1}, \ldots, f_m) \\
  &= p_i \cdot \left(G_i(f_1, \ldots, f_m) - G_i(f_1, f_2, \ldots, f_{j-1}, f'_j, f_{j+1}, \ldots, f_m)\right)\\
  &+ p_{i'} \left(G_{i'}(f_1, \ldots, f_m) - G_{i'}(f_1, f_2, \ldots, f_{j-1}, f'_j, f_{j+1}, \ldots, f_m)\right).
\end{align*}
If $N_i > 1$, then:
\[G_i(f_1, \ldots, f_m) - G_i(f_1, f_2, \ldots, f_{j-1}, f'_j, f_{j+1}, \ldots, f_m) = \frac{1}{N^i}\]
and if $N_i = 1$, then 
\[G_i(f_1, \ldots, f_m) - G_i(f_1, f_2, \ldots, f_{j-1}, f'_j, f_{j+1}, \ldots, f_m) = 1 - \frac{1}{m}.\]
Similarly, if $N_{i'} > 0$, then:
\[G_{i'}(f_1, \ldots, f_m) - G_{i'}(f_1, f_2, \ldots, f_{j-1}, f'_j, f_{j+1}, \ldots, f_m) = -\frac{1}{N^{i'} + 1}\]
and if $N_{i'} = 0$, then 
\[G_{i'}(f_1, \ldots, f_m) - G_i(f_1, f_2, \ldots, f_{j-1}, f'_j, f_{j+1}, \ldots, f_m) = - \left(1 - \frac{1}{m}\right).\]
Altogether, this implies that:
\[\Phi(f_1, \ldots, f_m) - \Phi(f_1, f_2, \ldots, f_{j-1}, f'_j, f_{j+1}, \ldots, f_m) = u(f_j; \mathbf{f}_{-j}) - u(f'_{j}; \mathbf{f}_{-j}),\]
which shows that $\Phi$ is a potential function of the game. Since pure strategy equilibria exist in potential games \citep{R73, MS96}, a pure strategy equilibrium must exist in the game.   
\end{proof}

\subsection{Proofs for Section \ref{subsec:idealized}}\label{appendix:proofsidealized}

We next prove Proposition \ref{prop:idealized}. The high-level intuition of the proof is as follows. By Lemma \ref{lemma:decomposition}, we can focus on one data representation $x$ at a time. Let $y^* =  \argmax_{y} \Prob[y \mid x]$ be the Bayes optimal label of $x$. The proof boils down to characterizing when the market outcome, $f_j(x) = y^*$ for $j \in [m]$, is an equilibrium, and the equilibrium social loss is determined by whether this market outcome is an equilibrium or not. 
\begin{proof}[Proof of Proposition \ref{prop:idealized}]
Let $\mathcal{D}^R$ be the marginal distribution of $X$ with respect to the distribution $(X,Y) \sim \mathcal{D}$. Let $f^*_1, \ldots, f^*_m$ be a pure-strategy equilibrium. The social loss is equal to:
\begin{align*}
 \texttt{SL}(f^*_1, \ldots, f^*_m) &= \E[\ell(f^*_{j^*(x,y)}(x), y)] \\
 &= \E_{x' \sim \mathcal{D}^R} \left[\E_{(x,y) \sim \mathcal{D}}[\ell(f^*_{j^*(x,y)}(x), y) \mid x = x']\right] \\
 &= \E_{x' \sim \mathcal{D}^R} \left[ \E_{(x,y) \sim \mathcal{D}_{x'}}[\ell(f^*_{j^*(x,y)}(x), y)]\right], 
\end{align*}
where $\mathcal{D}_{x'}$ denotes the conditional distribution $(X,Y) \mid X = x'$ where $(X,Y) \sim \mathcal{D}$. Thus, to analyze the overall social loss, we can separately analyze the social loss on each distribution $\mathcal{D}_{x'}$ and then average across distributions. It suffices to show that $\E_{\mathcal{D}_{x'}}[\ell(f^*_{j^*(x,y)}(x), y)] = \alpha(x')$ if $\alpha(x') < 1/m$ and zero if $\alpha(x') > 1/m$. 

To compute the social loss on $\mathcal{D}_{x'}$, we first apply Lemma \ref{lemma:decomposition}. This means that $(f^*_1(x'), \ldots, f^*_m(x'))$ is pure-strategy equilibrium with $\mathcal{D}_{x'}$. 
We characterize the equilibrium structure for $\mathcal{D}_{x'}$ and use this characterization to compute the equilibrium social loss. 

\paragraph{Equilibrium structure for $\mathcal{D}_{x'}$.} For notational convenience, let $y_i := f^*_i(x')$ denote the label chosen by model-provider $i$ and let 
let $y^* =  \argmax_{y} \Prob[y \mid x']$ be the Bayes optimal label for $x'$. We also abuse notation slightly and let $u(y_1; y_{-j})$ be model-provider $1$'s utility if they choose the label $y_1$ for $x'$ and the other model-provider's choose $y_{-j}$. 

We first show that all model-providers choosing $y^*$ is an equilibrium if and only if $\alpha(x') \le 1/m$. Let's fix $y_j = y^*$ for all $j \ge 2$ and look at model-provider $1$'s utility. We see that $u(y^*; y_{-j}) = 1/m$ and $u(1-y^*; y_{-j}) = \alpha(x')$. This means that $y^*$ is a best-response (i.e., $y^* \in \argmax_y u(y; y_{-j})$) if and only if $\alpha(x') \le 1/m$. 

We next show that if $\alpha(x') < 1/m$, then the market outcome $y_i = y^*$ for all $i \in [m]$ is the only pure-strategy equilibrium. Let $y_1, \ldots, y_m$ be a pure-strategy equilibrium. It suffices to show that $y^*$ is the unique best response to $y_{-j}$; that is, that $\left\{y^*\right\} = \argmax_y u(y; y_{-j})$. To show this, let $m'$ denote the size of the set $\left\{2 \le i \le m \mid y_i = y^*\right\}$. First, if $m' = 0$, then we have that
\[u(y^*; y_{-j}) = 1 - \alpha(x') > 1/m = u(1 - y^*; y_{-j}),  \]
where $1- \alpha(x') > 1/m$ follows from the fact that $1 - \alpha(x') \ge 1/2 \ge 1/m$ along with our assumption that $\alpha(x') \neq 1/m$. This demonstrates that $y^*$ is indeed the unique best response. If $m' = m -1$, then we have that:
\[u(y^*; y_{-j}) = 1/m > \alpha(x') = u(1 - y^*; y_{-j}),  \] as desired. Finally, if $1 \le m' \le m-2$, then:
\[ u(y^*; y_{-j}) = \frac{1-\alpha(x')}{m' + 1} \ge \frac{1-\alpha(x')}{m-1} > \frac{1}{m} > \alpha(x') > \frac{\alpha(x')}{m - m'} = u(1-y^*; y_{-j}), \]
as desired. 

Finally, we show that all model-providers choosing $1-y^*$ is never an equilibrium. Let's fix $y_j = 1 - y^*$ and look at model-provider 1's utility. We see that:
\[ u(y^*; y_{-j}) = 1 - \alpha(x') > \frac{ \alpha(x')}{m} =u(1-y^*; y_{-j}), \]
which shows that $y^*$ is the unique best response as desired.

\paragraph{Characterization of equilibrium social loss.} It follows from \eqref{eq:socialloss} that the equilibrium social loss $\E_{(x,y) \sim \mathcal{D}_{x'}}[\ell(f^*_{j^*(x,y)}(x), y)]$ is $\alpha(x')$ if all of the model-providers choose $y_i = y^*$, it is zero if a nonzero number of model-providers choose $y^*$ and a nonzero number of model-providers choose $1-y^*$, and it is $1-\alpha(x')$ if all of the model-providers choose $1- y^*$.

Let's combine this with our equilibrium characterization results. If $\alpha(x') < 1/m$, then the unique equilibrium is at $y_i = y^*$ so the equilibrium social loss is $\alpha(x)$ as desired. If $\alpha(x') > 1/m$, then neither $y_i = y^*$ for all $i \in [m]$ nor $y_i = 1- y^*$ for all $i \in [m]$ is an equilibrium. Since there exists a pure strategy equilibrium by Proposition \ref{prop:existence}, there must be a pure strategy equilibrium where a nonzero number of model-providers choose $y^*$ and a nonzero number of model-providers choose $1-y^*$. The equilibrium social loss is thus zero. 

Note when $\alpha(x') = 1 - 1/m$, there is actually an equilibrium where all of the model-providers choose $y_i = y^*$, $0$ and an equilibrium where a nonzero number of model-providers choose $y^*$ and a nonzero number of model-providers choose $1-y^*$; thus, the equilibrium social loss can be zero or $1/m$.

\end{proof}

\subsection{Proofs for Section \ref{subsec:idealizedmulticlass}}\label{appendix:multiclass}

We prove Proposition \ref{prop:idealizedmulticlass}.
\begin{proof}[Proof of Proposition \ref{prop:idealizedmulticlass}]

Let $\mathcal{D}^R$ be the marginal distribution of $X$ with respect to the distribution $(X,Y) \sim \mathcal{D}$. Let $f^*_1, \ldots, f^*_m$ be a pure-strategy equilibrium. The social loss is equal to:
\begin{align*}
 \texttt{SL}(f_1, \ldots, f_m) &= \E[\ell(f^*_{j^*(x,y)}(x), y)] \\
 &= \E_{x' \sim \mathcal{D}^R} \left[\E_{(x,y) \sim \mathcal{D}}[\ell(f^*_{j^*(x,y)}(x), y) \mid x = x']\right] \\
 &= \E_{x' \sim \mathcal{D}^R} \left[ \E_{(x,y) \sim \mathcal{D}_{x'}}[\ell(f^*_{j^*(x,y)}(x), y)]\right], 
\end{align*}
where $\mathcal{D}_{x'}$ denotes the conditional distribution $(X,Y) \mid X = x'$ where $(X,Y) \sim \mathcal{D}$. Thus, to analyze the overall social loss, we can separately analyze the social loss on each distribution $\mathcal{D}_{x'}$ and then average across distributions. It suffices to show that 
\[\E_{\mathcal{D}_{x'}} \left[\sum_{i=1}^K \alpha^i(x) \cdot \mathbbm{1}\left[\alpha^i(x) < \frac{c}{m}\right] \right] \le \E_{\mathcal{D}_{x'}}[\ell(f^*_{j^*(x,y)}(x), y)] \le  \E_{\mathcal{D}_{x'}} \left[\sum_{i=1}^K \alpha^i(x) \cdot \mathbbm{1}\left[\alpha^i(x) \le \frac{1}{m} \right] \right]. \]

To compute the social loss on $\mathcal{D}_{x'}$, we first apply Lemma \ref{lemma:decomposition}. This means that $(f^*_1(x'), \ldots, f^*_m(x'))$ is pure-strategy equilibrium with $\mathcal{D}_{x'}$. We then prove properties of the equilibrium structure for $\mathcal{D}_{x'}$ and use these properties to bound the equilibrium social loss. For notational convenience, let $y_i := f^*_i(x')$ denote the label chosen by model-provider $i$ and let 
let $y^* =  \argmax_{y} \Prob[y \mid x']$ be the Bayes optimal label for $x'$. We also abuse notation slightly and let $u(y_1; y_{-j})$ be model-provider $1$'s utility if they choose the label $y_1$ for $x'$ and the other model-provider's choose $y_{-j}$. We can rewrite:
\[\E_{\mathcal{D}_{x'}}[\ell(f^*_{j^*(x,y)}(x), y)] = \E_{\mathcal{D}_{x'}}\left[\sum_{i=1}^K \alpha^i(x) \cdot \mathbbm{1}\left[y_j \neq i \text{ for all } j \in [m] \right] \right]. \]
We first prove the lower bound on $\E_{\mathcal{D}_{x'}}[\ell(f^*_{j^*(x,y)}(x), y)]$ and then we prove the upper bound on $\E_{\mathcal{D}_{x'}}[\ell(f^*_{j^*(x,y)}(x), y)]$.

\paragraph{Proof of lower bound.} Let $y_1, \ldots, y_m$ be a pure strategy equilibrium. To prove the lower bound, it suffices to show that if $\alpha^i(x) < c/m$, then $y_j \neq i$ for all $j \in [m]$. 

Assume for sake of contradiction that $\alpha^i(x) < c/m$ and $y_j = i$ for some $j \in [m]$. Let $i' = \text{argmax}_{i'' \in \left\{0, 1, \ldots, K-1\right\}} \alpha^{i''}(x)$ be the class with maximal conditional probability. By the definition of $c$, we see that $ \alpha^{i'}(x) \ge c > c/m$ which also implies that $i' \neq i$. We split into two cases---(1) $y_{j'} \neq i'$ for all $j' \in \left\{0, 1, \ldots, K-1\right\}$, and (2)  $y_{j'} = i'$ for some $j' \in \left\{0, 1, \ldots, K-1\right\}$---and derive a contradiction in each case.

Consider the first case where $y_{j'} \neq i'$ for all $j' \in \left\{0, 1, \ldots, K-1\right\}$. Then if model-provider $j$ switched from $y_j$ to $i'$, the difference in their utility would be bounded as: 
\begin{align*}
u(i'; y_{-j}) - u(y_j; y_{-j}) &\ge \alpha^{i'}(x) - \left(\frac{\alpha^{i'}(x)}{m} + \alpha^i(x)\right) \\
&= \alpha^{i'}(x) \left(1 - \frac{1}{m} \right) - \alpha^i(x)\\
&> c \left(1 - \frac{1}{m}\right) - \frac{c}{m} \\
&= c \left(1 - \frac{2}{m}\right) \\
&\ge 0,  
\end{align*}
so $y_j$ is not a best-response for model-provider $j$, which is a contradiction. 

Now, consider the second case, where $y_{j'} = i'$ for some $j' \in \left\{0, 1, \ldots, K-1\right\}$. If we compare the utility when model-provider $j$ chooses $i'$ versus $y_j$ as their action, the difference is utility can be bounded as: 
\[u(i'; y_{-j}) - u(y_j; y_{-j}) \ge \frac{\alpha^{i'}(x)}{m} - \alpha^{i}(x) > \frac{c}{m} - \frac{c}{m} = 0. \]
so $y_j$ is not a best-response for model-provider $j$, which is a contradiction. 

This proves the lower bound as desired. 

\paragraph{Proof of upper bound.} Let $y_1, \ldots, y_m$ be a pure strategy equilibrium. To prove the upper bound, it suffices to show if $\alpha^i(x) > 1/m$, then $y_j = i$ for some $j \in [m]$. Assume for sake of contradiction that $\alpha^i(x) > 1/m$ and $y_j \neq i$ for all $j \in [m]$. For any set of actions $y_1, \ldots, y_m$, the total utility $\sum_{j=1}^m u(y_j; y_{-j}) = 1$ sums to $1$. Thus, some model provider $j \in [m]$ must have utility satisfying $u(y_j; y_{-j}) \le 1/m$. However, if model-provider $j$ instead chose action $i$, then they would achieve utility:
\[u(i; y_{-j}) \ge \alpha^i(x) > \frac{1}{m} \ge u(y_j; y_{-j}), \]
so $y_j$ is not a best-response for model-provider $j$, which is a contradiction. This proves the upper bound as desired.

\end{proof}

\subsection{Proofs for Section \ref{subsec:unequal}}\label{appendix:proofsunequal}

A useful lemma is the following calculation of the game matrix when there is a single representation $X = \left\{x\right\}$. 
\begin{lemma}
\label{lemma:gamematrix}
Let $X = \left\{x \right\}$, and let $\mathcal{F} = \mathcal{F}^{\text{binary}}_{\text{all}}$. Suppose that there are $m=2$ model-providers with market reputations $w_{\text{min}}$ and $w_{\text{max}}$, where $w_{\text{max}} \ge w_{\text{min}}$ and $w_{\text{max}} +  w_{\text{min}} = 1$. Suppose that user decisions are noiseless (i.e., $c \rightarrow 0$, so user decisions are given by \eqref{eq:userdecisionsmr}). Then the game matrix is specified by Table \ref{tab:gamematrix}.  
\end{lemma}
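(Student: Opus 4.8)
The plan is to prove the lemma by direct enumeration of the four pure-strategy profiles, since with a single representation $X = \{x\}$ each model-provider's action set collapses to $\mathcal{F}_{\text{all}}^x = \{f^0, f^1\}$, i.e.\ a binary choice of label for $x$. Because $x$ is fixed, the utility $u(f_j; \mathbf{f}_{-j}) = \E_{(x,y)\sim\mathcal{D}}[\Prob[j^*(x,y)=j]]$ is an expectation over $y \sim \Prob[\cdot \mid x]$ alone, so it suffices to compute this expectation for each of the $2 \times 2$ label combinations. I would write $y^* = \argmax_{y} \Prob[y \mid x]$, so that $\Prob[y = y^* \mid x] = 1 - \alpha(x)$ and $\Prob[y = 1 - y^* \mid x] = \alpha(x)$, and index the rows and columns of the matrix by the two players' choices relative to $y^*$.

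The core of the argument is a two-case split governed by the noiseless weighted tie-breaking rule \eqref{eq:userdecisionsmr}. First, when both providers select the \emph{same} label, they incur identical loss $|y - f_j(x)|$ for every realization of $y$, so both always lie in $\argmin_{j'} |y - f_{j'}(x)|$; the tie-breaking rule then awards provider $j$ a share $w_j / (w_{\text{min}} + w_{\text{max}}) = w_j$, using the normalization $w_{\text{min}} + w_{\text{max}} = 1$. Hence in the two ``agreeing'' profiles the utilities are simply each provider's own reputation, independent of $\alpha(x)$. Second, when the providers select \emph{different} labels, for each $y$ exactly one provider is correct (loss $0$) and the other incorrect (loss $1$), so the correct provider is the \emph{unique} loss-minimizer and captures the entire market (share $1$). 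Taking the expectation over $y$, provider $j$'s utility in a ``disagreeing'' profile is exactly $\Prob[f_j(x) = y \mid x]$, which equals $1 - \alpha(x)$ if $f_j(x) = y^*$ and $\alpha(x)$ if $f_j(x) = 1 - y^*$.

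Combining the two cases fills in all four entries of the bimatrix and reproduces Table \ref{tab:gamematrix}. I do not anticipate a genuine obstacle: the statement is a bookkeeping computation. The only point requiring care is to apply \eqref{eq:userdecisionsmr} correctly in the tie case versus the strict case---tracking that equal losses yield reputation-proportional splitting (which collapses to $w_j$ under the normalization) while unequal losses are winner-take-all---and to keep consistent the assignment of $w_{\text{min}}$ and $w_{\text{max}}$ to the two players across all four entries.
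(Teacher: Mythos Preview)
Your proposal is correct and takes essentially the same approach as the paper: the paper's proof consists of a single sentence observing that the result follows from applying \eqref{eq:userdecisionsmr} together with $\ell(y,y') = \mathbbm{1}[y \neq y']$, and your write-up is simply a careful unpacking of exactly that computation across the four profiles.
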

\begin{proof}
This follows from applying \eqref{eq:userdecisionsmr} and using the fact that $\ell(y, y') = \mathbbm{1}[y \neq y']$. 
\end{proof}

\begin{table}[]
    \centering
    \begin{tabular}{c|c | c}
     & $y_2 = 1 - y^*$ & $y_2 = y^*$ \\
     \hline
     $y_1 = 1 - y^*$    &  ($w_{\text{max}}, w_{\text{min}}$)   &  ($\alpha(x), 1 - \alpha(x)$) \\ \\
     $y_1 = y^*$    &  ($1 - \alpha(x), \alpha(x)$)   & ($w_{\text{max}}, w_{\text{min}}$)\\
     \\
    \end{tabular}
    \caption{Let $X = \left\{x \right\}$, $\mathcal{F} = \mathcal{F}^{\text{binary}}_{\text{all}}$, user decisions are noiseless, and user decisions are noiseless (i.e., $c \rightarrow 0$, so user decisions are given by \eqref{eq:userdecisionsmr}). Suppose that there are $m=2$ model-providers with market reputations $w_{\text{min}}$ and $w_{\text{max}}$, where $w_{\text{max}} \ge w_{\text{min}}$ and $w_{\text{max}} +  w_{\text{min}} = 1$. Let $y^* =  \argmax_{y} \Prob[y \mid x]$ be the Bayes optimal label for $x'$. The table shows the game matrix when model-provider 1 chooses the label $y_1$ and model provider 2 chooses the label $y_2$.  }
    \label{tab:gamematrix}
\end{table}

We show that pure strategy equilibria are no longer guaranteed to exist when model-providers have unequal market reputations, even when there is a single representation $X = \left\{x \right\}$. 
\begin{lemma}
\label{lemma:nonexistence}
Let $X = \left\{x \right\}$ let $\mathcal{F} = \mathcal{F}^{\text{binary}}_{\text{all}}$. Suppose that there are $m=2$ model-providers with market reputations $w_{\text{min}}$ and $w_{\text{max}}$, where $w_{\text{max}} \ge w_{\text{min}}$ and $w_{\text{max}} +  w_{\text{min}} = 1$. Suppose that user decisions are noiseless (i.e., $c \rightarrow 0$, so user decisions are given by \eqref{eq:userdecisionsmr}). If $\alpha(x) > w_{\text{min}}$, then a pure strategy equilibrium does not exist. 
\end{lemma}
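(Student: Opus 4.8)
The plan is to reduce the statement to a finite $2\times 2$ game analysis via Lemma~\ref{lemma:gamematrix}. Since $X = \{x\}$ and $\mathcal{F} = \mathcal{F}_{\text{all}}$, each model-provider has exactly two available actions—predicting $y^*$ or $1-y^*$—and Lemma~\ref{lemma:gamematrix} supplies the complete payoff matrix (Table~\ref{tab:gamematrix}). A pure-strategy equilibrium, if one existed, would therefore have to be one of the four action profiles, so it suffices to exhibit for each profile a player with a strictly profitable unilateral deviation. Throughout I would abbreviate $\alpha := \alpha(x)$ and use the two elementary bounds $\alpha \le 1/2$ (since $\alpha$ is the minimum of two probabilities summing to one) and $w_{\max} \ge 1/2 \ge w_{\min}$ (since $w_{\min} \le w_{\max}$ and $w_{\min} + w_{\max} = 1$), in addition to the hypothesis $\alpha > w_{\min}$.

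The cleanest bookkeeping, which I would use to organize the four checks, exploits that the game is constant-sum (every cell of Table~\ref{tab:gamematrix} has entries summing to one). A pure equilibrium is then precisely a saddle point of model-provider~$1$'s payoff matrix, whose row for $1-y^*$ has entries $w_{\max}, \alpha$ and whose row for $y^*$ has entries $1-\alpha, w_{\max}$. The hypothesis rewrites as $w_{\max} = 1 - w_{\min} > 1 - \alpha$, so the two row minima are $\alpha$ and $1-\alpha$, giving maximin value $\max(\alpha, 1-\alpha) = 1-\alpha$, while both column maxima equal $w_{\max}$, giving minimax value $w_{\max}$. Since $w_{\max} > 1-\alpha$, the maximin is strictly below the minimax, so no saddle point—and hence no pure equilibrium—exists. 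In direct terms this says that at either homogeneous profile the low-reputation provider gains by switching labels (capturing an $\alpha$ or $1-\alpha$ share, each exceeding its current $w_{\min}$), whereas at either heterogeneous profile the high-reputation provider gains by matching the other (capturing its $w_{\max}$ share, which exceeds its current $\alpha$ or $1-\alpha$).

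The only delicate point—more a detail to get right than a genuine obstacle—is ensuring every deviation is \emph{strictly} profitable, i.e.\ ruling out boundary ties such as $1-\alpha = w_{\max}$ or $w_{\max} = \alpha$. The first equality gives $\alpha = w_{\min}$ directly, while the second, combined with $\alpha \le 1/2 \le w_{\max}$, forces $\alpha = w_{\min} = 1/2$; either way the hypothesis $\alpha > w_{\min}$ is contradicted. This is exactly where the strict inequality $\alpha(x) > w_{\min}$ (rather than $\ge$) is needed, mirroring the analogous genericity condition in Proposition~\ref{prop:idealized}. Having shown that each of the four profiles is destabilized by a strictly profitable deviation, I would conclude that no pure-strategy equilibrium exists.
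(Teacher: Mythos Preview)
Your proof is correct. The paper's own argument is the direct case analysis you sketch at the end of your second paragraph: it first observes that the high-reputation provider's best response is always to match the other provider (eliminating the two heterogeneous profiles), and then that the low-reputation provider strictly prefers to deviate from either homogeneous profile. Your primary argument instead exploits the constant-sum structure to recast the question as a saddle-point check on provider~$1$'s payoff matrix, showing $\max_i \min_j a_{ij} = 1-\alpha < w_{\max} = \min_j \max_i a_{ij}$ and invoking the standard fact that a pure equilibrium in a constant-sum game exists iff these values coincide. This is a cleaner organizing principle that handles all four profiles at once, whereas the paper's version is slightly more hands-on but requires no appeal to the saddle-point characterization; your careful treatment of the boundary ties (third paragraph) is also more explicit than the paper's.
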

\begin{proof}
For notational convenience, let $y_i := f_i(x')$ denote the label chosen by model-provider $i$ and 
let $y^* =  \argmax_{y} \Prob[y \mid x']$ be the Bayes optimal label for $x'$. We also abuse notation slightly and let $u_i(y; y')$ be model-provider $i$'s utility if they choose the label $y$ for $x$ and the other model-providers choose $y'$. The proof follows from the game matrix show in Table \ref{tab:gamematrix} (Lemma \ref{lemma:gamematrix}). Using the fact that model-provider $1$ must best-respond to model-provider $2$'s action, this leaves $y_1 = 1-y^*$, $y_2 = 1-y^*$ and $y_1 = y^*$, $y_2 = y^*$. However, neither of these market outcomes captures a best-response for model-provider $2$: if $y_1 = 1 - y^*$, then model-provider $2$'s unique best response is $y^*$; if $y_1 = y^*$, then model-provider $2$'s unique best response is $1-y^*$. This rules out the existence of a symmetric or asymmetric pure strategy equilibrium. 
\end{proof}

Given the lack of existence of pure strategy equilibria, we must turn to mixed strategies. A mixed strategy equilibrium is guaranteed to exist since the game has finitely many actions $\mathcal{F}^{\text{binary}}_{\text{all}}$ and finitely many players $m$. Let $(\mu_1, \mu_2, \ldots, \mu_m)$ denote a mixed strategy profile over $\mathcal{F}^{\text{binary}}_{\text{all}}$. We show the following analogue of Lemma \ref{lemma:decomposition} that allows us to  again decompose model-provider actions into independent decisions about each representation $x$. To formalize this, let $\mathcal{D}$ be the data distribution, and again let $\mathcal{D}_x$ be the conditional distribution of $(X,Y)$ when $X = x$, where $(X,Y) \sim \mathcal{D}$. Again, let  $(\mathcal{F}^{\text{binary}}_{\text{all}})^x := \left\{f_0, f_1\right\}$ be the class of the (two) functions from a single representation $x$ to $\left\{0,1\right\}$, where $f_0(x) = 0$ and $f_1(x) = 1$.  Given a mixed strategy profile $\mu$ and a representation $x$, we define the conditional mixed strategy $\mu^x$ over $(\mathcal{F}^{\text{binary}}_{\text{all}})^x := \left\{f_0, f_1\right\}$ to be defined so $\Prob_{\mu^x}[f_i] := \Prob_{f \sim \mu}[f(x) = i]$ for $i \in \left\{0,1\right\}$. 
\begin{lemma}
\label{lemma:decompositionmixed}
Let $X$ be a finite set of representations, let $\mathcal{F} = (\mathcal{F}^{\text{binary}}_{\text{all}})$, and let $\mathcal{D}$ be the distribution over $(X,Y)$. For each $x \in X$, let $\mathcal{D}_x$ be the conditional distribution of $(X,Y)$ given $X = x$, where $(X,Y) \sim \mathcal{D}$, and let $(\mathcal{F}^{\text{binary}}_{\text{all}})^x := \left\{f_0, f_1\right\}$ be the class of the (two) functions from a single representation $x$ to $\left\{0,1\right\}$, where $f_0(x) = 0$ and $f_1(x) = 1$. Suppose that user decisions are noiseless (i.e., $c \rightarrow 0$, so user decisions are given by \eqref{eq:userchoicespecific}). A strategy profile $(\mu_1, \mu_2, \ldots, \mu_m)$ is an equilibrium if and only if for every $x \in X$, the market outcome $(\mu^x_1, \mu^x_2, \ldots, \mu^x_m)$ (where $\mu^x_1$, $\ldots$, $\mu^x_m$ are the conditional mixed strategies defined above) is an equilibrium for $(\mathcal{F}^{\text{binary}}_{\text{all}})^x$ with data distribution $\mathcal{D}_x$.
\end{lemma}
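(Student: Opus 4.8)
The plan is to mirror the proof of Lemma \ref{lemma:decomposition}, replacing the pure-strategy utility decomposition \eqref{eq:utilitydecomposed} with a mixed-strategy analogue and then arguing both directions by a one-shot deviation argument. The central observation is that, because $\mathcal{F} = \mathcal{F}_{\text{all}}$ consists of all functions and the $m$ providers randomize independently, the expected utility of each provider decomposes as an average over representations of ``local'' expected utilities that depend only on the conditional mixed strategies $\mu_j^{x'}$.

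First I would establish this decomposition. For a fixed realization $\mathbf{f} = (f_1, \ldots, f_m)$, equation \eqref{eq:utilitydecomposed} already writes $u(f_j; \mathbf{f}_{-j})$ as $\E_{x' \sim \mathcal{D}^R}[g_{x'}(f_1(x'), \ldots, f_m(x'))]$, where $g_{x'}(b_1, \ldots, b_m)$ denotes the value of $\E_{(x,y) \sim \mathcal{D}_{x'}}[\Prob[j^*(x,y)=j]]$ when $f_i(x') = b_i$; this depends only on the values the providers assign to $x'$, since at $x = x'$ the user's choice in \eqref{eq:userchoicespecific} is a function of $f_1(x'), \ldots, f_m(x')$ alone. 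Taking the expectation over $\mathbf{f} \sim \mu_1 \times \cdots \times \mu_m$ and exchanging the (finite) expectations, the expected utility becomes $\E_{x' \sim \mathcal{D}^R}[\E_{\mathbf{f} \sim \mu}[g_{x'}(f_1(x'),\dots,f_m(x'))]]$. Since the providers randomize independently, the joint law of $(f_1(x'), \ldots, f_m(x'))$ under $\mu$ is exactly the product $\mu_1^{x'} \times \cdots \times \mu_m^{x'}$ of the conditional strategies, so the inner expectation equals the provider's expected utility $U_j^{x'}(\mu_j^{x'}; \mu_{-j}^{x'})$ in the single-representation game at $x'$. This gives $U_j(\mu_j; \mu_{-j}) = \E_{x' \sim \mathcal{D}^R}[U_j^{x'}(\mu_j^{x'}; \mu_{-j}^{x'})]$.

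With the decomposition in hand, both directions follow. For the ``if'' direction, suppose every local profile is an equilibrium but some provider $j$ has a profitable global deviation $\mu_j'$. Applying the decomposition to both $\mu_j$ and $\mu_j'$ and subtracting, a strictly positive utility gain forces $U_j^{x'}((\mu_j')^{x'}; \mu_{-j}^{x'}) > U_j^{x'}(\mu_j^{x'}; \mu_{-j}^{x'})$ for at least one $x'$, contradicting that $\mu_j^{x'}$ is a best response in the local game at $x'$. For the ``only if'' direction, suppose $\mu$ is a global equilibrium but some local profile at $x'$ fails to be one, so provider $j$ has a local deviation $\nu$ over $\{f_0, f_1\}$ with $U_j^{x'}(\nu; \mu_{-j}^{x'}) > U_j^{x'}(\mu_j^{x'}; \mu_{-j}^{x'})$. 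I would lift $\nu$ to a global deviation $\mu_j'$ by sampling $f \sim \mu_j$ and then independently resampling its value at $x'$ according to $\nu$; because $\mathcal{F}_{\text{all}}$ permits arbitrary functions, $\mu_j'$ is a valid strategy whose marginal at $x'$ is $\nu$ and whose marginals at every other representation agree with $\mu_j$. By the decomposition only the $x'$-term changes, so $U_j(\mu_j'; \mu_{-j}) - U_j(\mu_j; \mu_{-j}) = \Prob_{\mathcal{D}^R}[x'] \cdot (U_j^{x'}(\nu; \mu_{-j}^{x'}) - U_j^{x'}(\mu_j^{x'}; \mu_{-j}^{x'})) > 0$ for $x'$ in the support of $\mathcal{D}^R$, contradicting the global equilibrium.

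The step I expect to require the most care is the decomposition itself, specifically the claim that the joint law of $(f_1(x'), \ldots, f_m(x'))$ factors as $\mu_1^{x'} \times \cdots \times \mu_m^{x'}$: this uses both that players randomize independently (so the product structure across players holds) and that only the marginal at $x'$ matters (so correlations within a single provider's strategy across different representations are irrelevant). The second step needing attention is the resampling-at-$x'$ construction in the ``only if'' direction, which is what legitimizes perturbing a single coordinate's marginal while fixing all others; this is precisely where the hypothesis $\mathcal{F} = \mathcal{F}_{\text{all}}$ is used.
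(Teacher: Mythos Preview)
Your proposal is correct and follows essentially the same approach as the paper: establish the utility decomposition $U_j(\mu_j;\mu_{-j}) = \E_{x'\sim\mathcal{D}^R}[U_j^{x'}(\mu_j^{x'};\mu_{-j}^{x'})]$ and then argue each direction by a one-coordinate deviation. The only cosmetic difference is that the paper constructs its deviations by flipping a pure strategy in the support at a single $x'$, whereas you lift a mixed local deviation $\nu$ via the resampling-at-$x'$ construction; both exploit $\mathcal{F}=\mathcal{F}_{\text{all}}$ in the same way, and your version is arguably cleaner.
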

\begin{proof}
The proof follows similarly to the proof of Lemma \ref{lemma:decompositionmixed}, but some minor generalizations to account for mixed strategy equilibria. Let $\mathcal{D}^R$ be the marginal distribution of $X$ with respect to the distribution $(X,Y) \sim \mathcal{D}$. 
Let $\mathcal{D}^R$ be the marginal distribution of $X$ with respect to the distribution $(X,Y) \sim \mathcal{D}$. First, we write model-provider $j$'s utility as:
\begin{equation}
\label{eq:utilitydecomposedgeneral}
 \E_{\substack{f_j \sim \mu_j \\ \mathbf{f}_{-j} \sim \mathbf{\mu}_{-j}}} \left[ u(f_j; \mathbf{f}_{-j}) \right]= \E_{\substack{f_j \sim \mu_j \\ \mathbf{f}_{-j} \sim \mathbf{\mu}_{-j}}} \left[\E_{(x,y) \sim \mathcal{D}} \left[\Prob[j^*(x,y) = j] \right]\right] =  \E_{x' \sim \mathcal{D}^R} \left[\E_{\substack{f_j \sim \mu_j^{x'} \\ \mathbf{f}_{-j} \sim \mathbf{\mu}^{x'}_{-j}}} \left[\E_{(x,y) \sim \mathcal{D}_{x'}} \left[\Prob[j^*(x,y) = j] \right]\right]\right].
\end{equation}
where $\mathbf{\mu}_{-j}$ denotes the mixed strategies chosen by the other model-providers.

First we show that if $\mu_1, \mu_2, \ldots, \mu_m$ is an equilibrium, then $(\mu^{x'}_1, \ldots, \mu^{x'}_m)$ is an equilibrium for $(\mathcal{F}^{\text{binary}}_{\text{all}})^{x'}$ with data distribution $\mathcal{D}_{x'}$. 
Let $f_j$ be in $\text{supp}(\mu_{j'})$. Assume for sake of contradiction that $(\mu^{x'}_1, \ldots, \mu^{x'}_m)$ is not an equilibrium. Then there exists $j' \in [m]$ such that model-provider $j'$ would achieve higher utility on $f^{1-f_{j'}(x')}$ than $f^{f_{j'}(x')}$. Let $f'_{j'}$ be the predictor given by $f'_{j'}(x) = f_{j'}(x)$ if $x \neq x'$ and $f'_{j'}(x') = 1 - f_{j'}(x')$. By equation \eqref{eq:utilitydecomposedgeneral}, this would mean that $u(f'_{j'}; \mathbf{\mu}_{-j'})$ is strictly higher than $u(f_{j'}; \mathbf{\mu}_{-j'})$ which is a contradiction.

Next, we show that if $(\mu^{x'}_1, \ldots, \mu^{x'}_m)$  is an equilibrium for $(\mathcal{F}^{\text{binary}}_{\text{all}})^{x'}$ with data distribution $\mathcal{D}_{x'}$ for all $x' \in X$ then $\mu_1, \ldots, \mu_m$ is an equilibrium. Let $f_j$ be in $\text{supp}(\mu_{j'})$. Assume for sake of contradiction that there exists $j'$ such that $u(f'_{j'}; \mathbf{\mu}_{-j'}) > u(f_j; \mathbf{\mu}_{-j'})$. By equation \eqref{eq:utilitydecomposed}, there must exist $x'$ such that  $\E_{\mathbf{f}_{-j'} \sim \mathbf{\mu}^{x'}_{-j'}} \left[\E_{(x,y) \sim \mathcal{D}_{x'}} \left[\Prob[j^*(x,y) = j'] \right]\right]$ is higher for $f'_{j'}$ than $f_{j'}$. This means that $(\mu^{x'}_1, \ldots, \mu^{x'}_m)$ is not an equilibrium, which is a contradiction. 
\end{proof}

We now prove Proposition \ref{prop:2model-providers}.
\begin{proof}[Proof of Proposition \ref{prop:2model-providers}]
Let $\mathcal{D}^R$ be the marginal distribution of $x$ with respect to the distribution $(x,y) \sim \mathcal{D}$. 
Let $\mu_1, \mu_2$ be a mixed strategy equilibrium. The social loss is equal to:
\begin{align*}
 \E_{\substack{f_1 \sim \mu_1 \\ f_2 \sim \mu_2}}[\texttt{SL}(f_1, f_2)] &= \E[\ell(f_{j^*(x,y)}(x), y)] \\
 &= \E_{\substack{f_1 \sim \mu_1 \\ f_2 \sim \mu_2}}\left[\E_{x' \sim \mathcal{D}^R} \left[\E_{(x,y) \sim \mathcal{D}}[\ell(f_{j^*(x,y)}(x), y) \mid x = x']\right]\right] \\
 &= \E_{\substack{f_1 \sim \mu_1 \\ f_2 \sim \mu_2}}\left[\E_{x' \sim \mathcal{D}^R} \left[ \E_{(x,y) \sim \mathcal{D}_{x'}}[\ell(f_{j^*(x,y)}(x), y)]\right]\right] \\
 &= \E_{x' \sim \mathcal{D}^R} \left[ \E_{\substack{f_1 \sim \mu^*_1 \\ f_2 \sim \mu^*_2}} \left[\E_{(x,y) \sim \mathcal{D}_{x'}}[\ell(f_{j^*(x,y)}(x), y)]\right]\right] \\
 &= \E_{x' \sim \mathcal{D}_X} \left[ \E_{\substack{f_1 \sim \mu^{x'}_1 \\ f_2 \sim \mu^{x'}_2}}\left[\E_{(x,y) \sim  \mathcal{D}_{x'}}[\ell(f_{j^*(x,y)}(x), y)]\right]\right] \\
\end{align*}
where $\mathcal{D}_{x'}$ denotes the conditional distribution $(X,Y) \mid X = x'$ where $(X,Y) \sim \mathcal{D}$ and where $\mu^x$ denotes the conditional mixed strategy $(\mathcal{F}^{\text{binary}}_{\text{all}})^x := \left\{f^0, f^1\right\}$ to be defined so $\Prob_{\mu^x}[f^i] := \Prob_{f \sim \mu}[f(x) = i]$ for $i \in \left\{0,1\right\}$
Thus, to analyze the overall social loss, we can separately analyze the social loss on each distribution $\mathcal{D}_{x'}$ and then average across distributions. It suffices to show that:
\[ \E_{\substack{f_1 \sim \mu^{x'}_1 \\ f_2 \sim \mu^{x'}_2}}\left[\E_{(x,y) \sim \mathcal{D}_{x'}}[\ell(f_{j^*(x,y)}(x), y)]\right] = 
\begin{cases}
 \alpha(x') & \text{ if } \alpha(x') < w_{\text{min}} \\
 \frac{2(\alpha(x') - w_{\text{min}}) \cdot (w_{\text{max}} - \alpha(x))}{(1 - 2 \cdot w_{\text{min}})^2} & \text{ if }\alpha(x') > w_{\text{min}}.
\end{cases}
\]

To compute the social loss on $\mathcal{D}_{x'}$, we first apply Lemma \ref{lemma:decompositionmixed}. This means that $(\mu^{x'}_1, \mu^{x'}_2)$ is mixed-strategy equilibrium with $\mathcal{D}_{x'}$. 
We characterize the equilibrium structure for $\mathcal{D}_{x'}$ and use this characterization to compute the equilibrium social loss.

Our main technical ingredient is the game matrix in Table \ref{tab:gamematrix} (Lemma \ref{lemma:gamematrix}). We will slightly abuse notation and view choosing the label $y$ as the strategy of the model-provider. Accordingly, we view a mixed strategy as a distribution over $\left\{0,1\right\}$. For notational convenience, let $y_i := f_i(x')$ denote the label chosen by model-provider $i$ and let $y^* =  \argmax_{y} \Prob[y \mid x']$ be the Bayes optimal label for $x'$. 
We split into two cases: $\alpha(x') < w_{\text{min}}$ and $\alpha(x') > w_{\text{min}}$.

\paragraph{Case 1: $\alpha(x') < w_{\text{min}}$.} We claim that the unique equilibrium is a pure strategy equilibrium where $y_1 = y_2 = y^*$. First, if $\alpha(x) < w_{\text{min}}$, we show that choosing $y^*$ is a strictly dominant strategy for model-provider $1$. This follows from the fact that $1-\alpha(x) > w_{\text{max}}$ and $w_{\text{max}} \ge w_{\text{min}} >  \alpha(x)$. Thus, model-provider 1 must play a pure strategy where they always choose $y_1 = y^*$. When model-provider 1 chooses $y^*$, then the unique best response for model-provider 2 is also to choose $y^*$ since $\alpha(x') < w_{\text{min}}$. This establishes that $y_1 = y_2 = y^*$ is the unique equilibrium. This also implies that the equilibrium social loss satisfies:
\[\E_{\substack{f_1 \sim \mu^{x'}_1 \\ f_2 \sim \mu^{x'}_2}}\left[\E_{(x,y) \sim \mathcal{D}_{x'}}[\ell(f_{j^*(x,y)}(x), y)]\right] = \alpha(x')\]
as desired. 

\paragraph{Case 2: $\alpha(x') > w_{\text{min}}$.} Let $p_{1} = \mathbb{P}_{\mu^{x'}_1}[y_1 = y^*]$ and let $p_{2} = \mathbb{P}_{\mu^{x'}_2}[y_2 = y^*]$. By Lemma \ref{lemma:nonexistence}, a pure strategy equilibrium does not exist. Thus, we consider mixed strategies. 
Since pure strategy equilibria do not exist, at least one of $p_1$ and $p_2$ must be strictly between zero and one. We compute $p_1$ and $p_2$, splitting into two cases: (1) $p_1 > 0$ and (2) $p_2 > 0$.

If $p_1 > 0$, then we know that model-provider $1$ must be indifferent between choosing $y^*$ and $1-y^*$. This means that:
\[p_2 \alpha(x') + (1-p_2) w_{\text{max}}  = (1 - p_2)(1- \alpha(x')) + p_2 w_{\text{max}}.\]
Solving for $p_2$, we obtain:
\[p_2 = \frac{w_{\text{max}} - (1 - \alpha(x'))}{2 w_{\text{max}} - 1} =  \frac{\alpha(x') - w_{\text{min}}}{1 - 2 w_{\text{min}}}  > 0.\]

If $p_2 > 0$, then we know that model-provider $2$ must be indifferent between choosing $y^*$ and $1-y^*$. This means that:
\[p_1 \alpha(x') + (1 - p_1) w_{\text{min}}  = (1 - p_1)(1- \alpha(x')) + p_1 w_{\text{min}}.\]
Solving for $p_1$, we obtain:
\[p_1 = \frac{(1-\alpha(x')) - w_{\text{min}}}{1 - 2 w_{\text{min}}} = \frac{w_{\text{max}} - \alpha(x)}{1 - 2 w_{\text{min}}} > 0.\]

Putting this all together, we see that:
\begin{align*}
p_1 &= \frac{w_{\text{max}} - \alpha(x')}{1 - 2 w_{\text{min}}}  \\
p_2 &= \frac{\alpha(x') - w_{\text{min}}}{1 - 2 w_{\text{min}}},
\end{align*} 
and in fact $p_1 + p_2 = 1$. 

Using this characterization of $p_1$ and $p_2$, we see that the equilibrium social loss is equal to:
\begin{align*}
\E_{\substack{f_1 \sim \mu^{x'}_1 \\ f_2 \sim \mu^{x'}_2}}\left[\E_{(x,y) \sim \mathcal{D}_{x'}}[\ell(f_{j^*(x,y)}(x), y)]\right] &= 
\alpha(x') \mathbb{P}[y_1 = y^*] \mathbb{P}[y_2 = y^*] + (1 - \alpha(x')) \mathbb{P}[y_1 = 1 - y^*] \mathbb{P}[y_2 = 1 - y^*] \\
&= \alpha(x') p_1 p_2 + (1 - \alpha(x)) (1-p_1) (1-p_2)  \\
&= \alpha(x') p_1 p_2 + (1 - \alpha(x)) p_1p_2  \\
&= p_1 p_2 \\
&=  \frac{(\alpha(x') - w_{\text{min}}) \cdot (w_{\text{max}} - \alpha(x))}{(1 - 2 \cdot w_{\text{min}})^2},
\end{align*}
as desired.

\end{proof}

\end{document}